\newif\ifcolourslides \colourslidestrue
\crefname{theorem}{thm.}{theorems}
\crefname{definition}{defn.}{definitions}
\crefname{proposition}{prop.}{propositions}
\crefname{figure}{fig.}{figures}
\crefname{enumi}{condition}{conditions}
\crefname{section}{sec.}{sections}
 \theoremstyle{plain}
 \newcommand{\thistheoremname}{}
 \newtheorem*{genericthm}{\thistheoremname}
 \newenvironment{ourproblem}[1]
   {\renewcommand{\thistheoremname}{#1}\begin{genericthm}}{\end{genericthm}}
\newcommand{\dotminus}{\mathbin{\text{\@dotminus}}}
\newcommand{\@dotminus}{%
  \ooalign{\hidewidth\raise1ex\hbox{.}\hidewidth\cr$\m@th-$\cr}%
}
\DeclarePairedDelimiter{\set}{\lbrace}{\rbrace}
\renewcommand{\vec}[1]{\mathbf{#1}}
\newcommand{\defeq}{=_{\textrm{\scriptsize{def}}}}
\newcommand{\nat}{\mathbb{N}}
\newcommand{\partialfn}{\rightharpoonup}
\newcommand{\finpartialfn}{\partialfn_{\textrm{\tiny fin}}}
\newcommand{\var}{\mathsf{Var}}
\newcommand{\nil}{\mathsf{nil}}
\newcommand{\emp}{\mathsf{emp}}
\newcommand{\psto}[2]{{#1} \mapsto {#2}}
\newcommand{\ASL}{\mathsf{ASL}}
\newcommand{\SL}{\mathsf{SL}}
\newcommand{\pres}{\mathsf{PbA}}
\newcommand{\dom}[1]{\mathrm{dom}\left(#1\right)}
\newcommand{\nilv}{\mathit{nil}}
\newcommand{\bigsepstar}{\mathop{\raisebox{-1ex}{{\Huge $*$}}}}
\newcommand{\ls}[2]{\mathsf{ls}\,#1\,#2}
\newcommand{\authorcomment}[2]{
\begin{center}
\fbox{
\begin{minipage}{.9\textwidth}
{\bf #1:} {\it #2}
\end{minipage}}
\end{center}}
\long\def\maxcomment#1{\authorcomment{Max' comment}{#1}}
\newcommand{\NP}{\ensuremath{\mathsf{NP}}}
\newcommand{\CoNP}{\ensuremath{\mathsf{coNP}}}
\newcommand{\EXPTIME}{\ensuremath{\mathsf{EXP}}}
\newcommand{\EXP}{\EXPTIME}
\long\def\comment#1{}
\long\def\commentt#1{}
\renewcommand{\paragraph}[1]{\vspace{2pt}\noindent\textit{#1}}
\long\def\comment#1{}
\ifcolourslides \usepackage{color}\input{rgb} 
\def\color#1{}
\long\def\comment#1{}
\def\blue#1{\mbox{#1}}
\def\red#1{\mbox{#1}}
\def\Dred#1{\mbox{#1}}
\def\sep{\mid} 
\def\CircNode#1{\circle{16}\makebox(0,0)[c]{#1}}
\def\node#1{\widehat{#1}}
\def\WEarrow#1#2#3
\def\WEArrow#1#2#3
\def\EWArrow#1#2#3
\def\EWarrow#1#2#3
\def\cNorthEast#1#2#3
\def\cNorthWest#1#2#3
\def\cSouth#1#2#3
\def\cNorth#1#2#3
\long\def\smallModelXY#1
\long\def\smallModelYX#1
\def\blue#1{{\color{blue}{{\boldmath\bf\mbox{#1}}}}}
\def\red#1{{\color{red}{{\boldmath\bf\mbox{#1}}}}}
\def\Dred#1{{\color{DarkRed}{{\boldmath\bf\mbox{#1}}}}}
\def\sep{\ \Dred{{\boldmath $\mid$}}\ }
\newcommand{\wand}{%
  \mathrel{\mbox{$\hspace*{-0.03em}\mathord{-}\hspace*{-0.66em}
  \mathord{-}\hspace*{-0.36em}\mathord{*}$\hspace*{-0.005em}}}}
\begin{document}
\title{On the Complexity of Pointer Arithmetic in Separation Logic
\\ (an extended version) }

\author{James Brotherston\inst{1}
\and Max Kanovich\inst{1,2}}

\institute{University College London, UK
\and National Research University Higher School of Economics,
Russian Federation}

\maketitle

\begin{abstract}
We investigate the complexity consequences of adding pointer arithmetic
to separation logic.  Specifically, we study extensions
 of the points-to
fragment of symbolic-heap separation logic with various forms of
Presburger arithmetic constraints.

Most significantly, we find that, even in the minimal case
 when we allow only conjunctions
 of simple ``difference constraints'' $x' \leq x \pm k$
 (where $k$ is an integer), polynomial-time decidability
 is already impossible:
 satisfiability becomes $\NP$-complete,
 while quantifier-free entailment becomes $\CoNP$-complete
 and quantified entailment becomes $\Pi^P_2$-complete
($\Pi^P_2$ is the second class in the polynomial-time hierarchy)

 In fact we prove that the upper bound is the same, $\Pi^P_2$,
 even for the full pointer arithmetic but with a fixed pointer offset,
 where we allow any Boolean combinations
 of the elementary formulas\ \mbox{$(x'=x+k_0)$},\
 \mbox{$(x'\leq x+k_0)$},\ and\ \mbox{$(x'<x+k_0)$},
 and, in addition to the points-to formulas,
 we allow spatial formulas of the arrays the length
 of which is $\leq k_0$
 and lists which length is $\leq k_0$, etc,
  where $k_0$ is a fixed integer.

However, if we allow a significantly more expressive form of pointer
arithmetic --- namely arbitrary Boolean combinations of elementary
formulas over arbitrary pointer sums --- then
 the complexity increase is relatively modest
 for satisfiability and quantifier-free entailment: they are
 still $\NP$-complete and $\CoNP$-complete respectively,
 and the complexity appears to increase drastically
 for quantified entailments,
 which becomes $\Pi^{\mathrm{EXP}}_1$-complete.
\end{abstract}

\keywords Separation logic, pointer arithmetic, complexity.

\long\def\comment#1{}

\section{Introduction}
\label{sec:introduction}

\emph{Separation logic} ($\SL$)~\cite{Reynolds:02} is a well-known and popular Hoare-style framework for verifying the memory safety of heap-manipulating programs.  Its power stems from the use of \emph{separating conjunction} in its assertion language, where $A * B$ denotes a portion of memory that can be split into two disjoint fragments satisfying $A$ and $B$ respectively.  Using separating conjunction, the  \emph{frame rule} becomes sound~\cite{Yang-OHearn:02}, capturing the fact that any valid Hoare triple can be extended with the same separate memory in its pre- and postconditions and remain valid, which empowers the framework to scale to large programs (see e.g.~\cite{Yang-etal:08}).  Indeed, separation logic now forms the basis for verification tools used in industrial practice, notably Facebook's \textsc{Infer}~\cite{Calcagno-etal:15} and Microsoft's \textsc{SLAyer}~\cite{Berdine-Cook-Ishtiaq:11}.

Most separation logic analyses and tools restrict the form of assertions
to a simple propositional structure known as \emph{symbolic
  heaps}~\cite{Berdine-Calcagno-OHearn:04}.  Symbolic heaps are
(possibly existentially quantified) pairs of so-called ``pure'' and
``spatial'' assertions, where pure assertions mention only equalities
and disequalities between variables and spatial formulas are
$*$-conjoined lists of pointer formulas $x \mapsto y$ and data structure
formulas typically describing segments of \emph{linked lists}
($\ls{x}{y}$) or sometimes binary trees.  This fragment of the logic
enjoys decidability in polynomial time~\cite{Cook-etal:11} and is
therefore highly suitable for use in large-scale analysers.  However, in
recent years, various authors have investigated the computational
complexity of (and/or developed prototype analysers for) many other
fragments employing various different assertion constructs, including
user-defined inductive
predicates~\cite{Iosif-etal:13,Brotherston-etal:14,Brotherston-etal:16,Antonopoulos-etal:14,Chen-Song-Wu:17},
pointers with \emph{fractional
  permissions}~\cite{Le-Gherghina-Hobor:12,Demri-Lozes-Lugiez:17},
arrays~\cite{Brotherston-Gorogiannis-Kanovich:17,Kimura-Tatsuta:17},
separating \emph{implication}
($\wand$)~\cite{Calcagno-Yang-OHearn:01,Brochenin-Demri-Lozes:12},
reachability predicates~\cite{Demri-Lozes-Mansutti:18} and
arithmetic~\cite{Le-Sun-Chin:16,Le-Tatsuta-Sun-Chin:17}.

It is with this last feature, arithmetic, with which we are concerned in
this paper.  In general, assertions involving arithmetic arise naturally
and for obvious reasons when analysing arithmetical programs; moreover,
the use of \emph{pointer arithmetic}, where pointers are treated
explicitly as numerical addresses which can be manipulated
arithmetically, is a standard feature e.g. of C code.  We therefore set
out by asking the following question: \emph{How much pointer arithmetic can one
  add to separation logic and remain within polynomial time?}

Unfortunately, and perhaps surprisingly, the answer turns out to be: essentially none at all.

We study the complexity of symbolic-heap separation logic with pointers,
but no other data structures, when pure formulas are extended by
arithmetical constraints, in two variants. The first variant
encapsulates a minimal language for pointer arithmetic, allowing only
conjunctions
of ``difference constraints'' \mbox{$x \leq y \pm k$}
 (where $k$ is an integer), whereas the second is more expressive,
 allowing
 arbitrary Boolean combinations of elementary formulas over  arbitrary
 pointer-and-offset sums.

We certainly do \emph{not} claim that either fragment is appropriate for
practical program verification; clearly, lacking constructs for lists or
other data structures, they will be insufficiently expressive for most
purposes (although they might be practical e.g. for some concurrent
programs that deal only with shared memory buffers of a small fixed
size).  The point is that any practical fragment of separation logic
employing arithmetic will almost inevitably include our minimal language
and thus inherit its computational lower bounds.

Our complexity results for SL pointer arithmetic are summarised in
Table~\ref{t-summary}. Perhaps our most striking result is that, even
for the case of our minimal SL pointer arithmetic where only constant
pointer offsets and conjunctions are permitted, the satisfiability
problem is already $\NP$-complete.  On the other hand, the problem is
still in $\NP$ when we extend to full pointer arithmetic.  However,
there is at least one material difference between the two fragments:
minimal pointer arithmetic enjoys the \emph{small model property},
meaning that any satisfiable symbolic heap $A$ has a model of size
polynomial in the size of $A$, whereas this property fails for full
pointer arithmetic.

In the case of the entailment problem, the story is somewhat similar:
for quantifier-free entailments the problem becomes $\CoNP$-complete,
irrespective of whether we consider minimal or full pointer arithmetic.
However, the complexity appears to increase drastically for quantified
entailments,
where the problem is $\Pi^P_2$-complete for minimal pointer arithmetic
but $\Pi^{\mathrm{EXP}}_1$-complete for full pointer arithmetic.
($\Pi^P_2$~is the second class in
 the \emph{polynomial-time hierarchy}~\cite{Stockmeyer:77}
 and $\Pi^{\mathrm{EXP}}_1$~is the
 first class in the \emph{exponential-time hierarchy},
 which corresponds to $\Pi^0_2$ Presburger arithmetic~\cite{Haase:14}).

\begin{table*}[t]  
\begin{center}%
\begin{tabular}{@{}l|c|c}%
    & minimal pointer arithmetic   & full pointer arithmetic
\\\hline\hline%
 Satisfiability   & \NP-complete 
  & \NP-complete
\\\hline%
 Small model property & yes  & no
\\\hline%
 Entailment, quantifier-free & \CoNP-complete  
  & \CoNP-complete
\\\hline%
 Entailment, quantified & $\Pi^P_2$-complete   
  &
  $\Pi^{\mathrm{EXP}}_1$-complete. 
\\\hline%
\end{tabular}
\end{center}
\caption{Summary of complexity results.}
\label{t-summary}
\end{table*}

The remainder of this paper is structured as follows.  In
Section~\ref{sec:language} we define symbolic-heap separation logic with
pointer arithmetic, in both ``minimal'' and ``full'' flavours.
Sections~\ref{sec:satisfiability} and~\ref{sec:entailment} study the
satisfiability and entailment problems, respectively, for our minimal
and full versions of SL pointer arithmetic, establishing upper and lower
complexity bounds for all cases.
In Section~\ref{s-P2-upper} we establish the small model property
and thereby the $\Pi^P_2$ upper bound
for the quantified entailments within minimal pointer arithmetic.
 Section~\ref{sec:conclusion} concludes.

\section{Separation logic with pointer arithmetic}
\label{sec:language}
Here, we introduce our language of \emph{separation logic with pointer arithmetic}, building on the well-known ``symbolic heap'' fragment over pointers~\cite{Berdine-Calcagno-OHearn:04}.

Because we have to take into account the balance between the
arithmetical part and the spatial part of the language, we consider two
varieties of pointer arithmetic: a ``minimal'' fragment containing only
the bare essentials, and a ``full'' fragment allowing greater
expressivity. To show lower complexity bounds, we have to challenge the
fact that $\Sigma^0_1$ Presburger arithmetic is already
\mbox{$\NP$-}hard by itself; thus, to reveal the true memory-related
nature of the problem, we restrict the arithmetical part of the language
by restricting the pure part of our language to something so simple that
it can be processed in polynomial time.. This leads us to consider {\em
  minimal pointer arithmetic}, in which we allow only conjunctions of
`difference constraints' of the form \mbox{$x'=x\pm k$}, and
\mbox{$x'\leq x\pm k$} where $x$ and $x'$ are variables and $k$ is an
integer (even negation \mbox{$x'\neq x$} is not permitted).  On the
other hand, for \emph{upper} complexity bounds, it stands to reason that
we should aim for as much expressivity as possible while remaining
within a particular complexity class.  Thus we also consider {\em full
  pointer arithmetic}, in which arbitrary Boolean combinations of
elementary formulas over  arbitrary pointer sums are permitted.

\begin{definition}[$\SL$ pointer arithmetic]\label{d-full}
A \blue{symbolic heap} is given by
\begin{equation}%
  \exists \vec{z}.\ \Pi\colon F       \label{eq-heap}
\end{equation}%
where $\vec{z}$ is a tuple of variables from an infinite set $\var$, and $\Pi$ and $F$ are respectively \emph{pure} and \emph{spatial} formulas, defined below.

For \emph{full pointer arithmetic}, we define terms~$t$, pure formulas~$\Pi$,
 and spatial formulas~$F$ by the following grammar:
\begin{align*}
     t & \Coloneqq\, x \in \var \sep t+k  \sep t+t
\\ \Pi & \Coloneqq\, t = t 
    \sep t \leq t
    \sep t < t \sep \Pi \land \Pi \sep \Pi \lor \Pi  \sep \neg\Pi
 \\
 F & \Coloneqq\, \emp \sep t\mapsto t \sep t\mapsto \nil{}
 \sep F * F 
  \label{eq-full}
\end{align*}
where $k$ ranges over $\nat$.

For \emph{minimal pointer arithmetic}, we instead define terms~$t$, pure formulas~$\Pi$,
 and spatial formulas~$F$ by the following simpler grammar:
\begin{align*}
     t & \Coloneqq\, x \in \var \sep t+k 
\\ \Pi & \Coloneqq\, t = t 
    \sep t \leq t
    \sep t < t \sep \Pi \land \Pi 
 \\
 F & \Coloneqq\, \emp 
 \sep t\mapsto \nil{}
 \sep F * F 
\end{align*}
Whenever one of $\Pi, F$ is empty in a symbolic heap $\exists \vec{z}.\ \Pi\colon F$, we omit the colon.
\end{definition}

In the case of minimal $\SL$ pointer arithmetic, the pure part of a
symbolic heap is a conjunction of {\em `difference constraints'} of the
form \mbox{$x'=x\pm k$} or \mbox{$x'\leq x\pm k$},
 where $x$ and $x'$ are variables, and $k$ is a fixed offset in $\nat$. 
 The satisfiability of such formulas can be decided in
 polynomial time; see~\cite{Cormen-etal:09}.
 The crucial observation is: 
\begin{proposition}\label{p-Cormen}
A `circular' system of difference constraints\ $x_1\leq x_2 + k_{12}$,\
   \mbox{$x_2\leq x_3 + k_{23}$},\
 \dots,\ \mbox{$x_{m-1}\leq x_{m} + k_{m-1,m}$},
 \mbox{$x_{m}\leq x_{1} + k_{m,m+1}$} allows one to conclude
 that\ \mbox{$x_1-x_1 \leq \sum_{i=1}^m k_{i,i+1}$},\
 which is a contradiction iff the latter sum is negative.
\end{proposition}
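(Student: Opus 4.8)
The plan is to establish the claimed inequality by transitivity of $\leq$ and then read off when the resulting arithmetical statement is contradictory. There is no clever idea required; the content is entirely a telescoping computation followed by a sign analysis.

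First I would chain the constraints. Starting from $x_1 \leq x_2 + k_{12}$ and successively substituting each later constraint $x_i \leq x_{i+1} + k_{i,i+1}$, transitivity of $\leq$ together with monotonicity of $\leq$ under adding a constant yields
\begin{equation*}
  x_1 \;\leq\; x_2 + k_{12} \;\leq\; x_3 + k_{23} + k_{12} \;\leq\; \cdots \;\leq\; x_1 + \sum_{i=1}^{m} k_{i,i+1},
\end{equation*}
where the final step uses the closing constraint $x_m \leq x_1 + k_{m,m+1}$ and the convention $x_{m+1} = x_1$. Equivalently, one may simply sum all $m$ constraints at once: since $\sum_{i=1}^m x_i = \sum_{i=1}^m x_{i+1}$ (both equal $x_1 + \cdots + x_m$ in the cyclic indexing), these terms cancel and one is left with the same conclusion directly. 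Subtracting $x_1$ from both sides produces $x_1 - x_1 \leq \sum_{i=1}^{m} k_{i,i+1}$, i.e. $0 \leq \sum_{i=1}^{m} k_{i,i+1}$, which is exactly the asserted consequence. It then remains to note that, since $x_1 - x_1 = 0$, this derived inequality holds precisely when $\sum_{i=1}^{m} k_{i,i+1} \geq 0$ and fails precisely when $\sum_{i=1}^{m} k_{i,i+1} < 0$; in the latter case the system derives the false inequality $0 \leq (\text{negative})$ and is thus contradictory, giving both directions of the ``iff''.

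I expect essentially no obstacle, as the forward direction is a pure telescoping argument. The only point worth flagging is that the \emph{converse} half of the underlying satisfiability characterisation --- that a full system of difference constraints is \emph{satisfiable} as soon as no cycle has negative weight --- is not delivered by this telescoping alone, but relies on the standard shortest-path (Bellman--Ford) model construction of~\cite{Cormen-etal:09}. For the present proposition, however, the ``iff'' concerns only the derived single inequality $0 \leq \sum_{i=1}^m k_{i,i+1}$, so no model construction is needed and the argument is complete once the chaining and the trivial sign analysis are in place.
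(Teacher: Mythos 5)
Your proof is correct and is exactly the standard telescoping argument that the paper relies on implicitly (the proposition is stated as an observation with a citation to~\cite{Cormen-etal:09} and no separate proof is given). Your remark that the converse satisfiability direction would need the shortest-path model construction, but is not part of what this proposition asserts, is also accurate.
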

Thus, considering our symbolic heaps in minimal pointer arithmetic
readdresses the challenge of establishing relevant lower bounds to the
spatial part of the language.

\long\def\satOne#1#2#3#4#5#6%
{\setlength{\unitlength}{1pt}\HEIGHT=#1\multiply\HEIGHT by 2%
\advance\HEIGHT by 21
\begin{picture}(0,\HEIGHT)%
  \put(0,\HEIGHT){%
\begin{picture}(0,0)\thicklines
\Ycur=-7%
 \put(0,-\Ycur){\CircNode{#3}}%
\advance \Ycur by #1
\advance \Ycur by #1%
\Xcur=#1\multiply\Xcur by 2%
 \put(0,-\Ycur){\CircNode{#2}}%
 \put(-\Xcur,-\Ycur){\CircNode{#4}}%
 \put( \Xcur,-\Ycur){\CircNode{#5}}%
\Xcur=#1%
 \put(-\Xcur,-\Ycur){\WEarrow{#1}{#1}{$\ -1$}}
 \put(-\Xcur,-\Ycur){\EWarrow{#1}{#1}{#6}}
 \put( \Xcur,-\Ycur){\WEarrow{#1}{#1}{#6}}
 \put( \Xcur,-\Ycur){\EWarrow{#1}{#1}{$-1$}}
\Xcur=#1\multiply\Xcur by 2%
 \put( -\Xcur,-\Ycur){\cNorthEast{\Xcur}{$-1$}{r}}
 \put( \Xcur,-\Ycur){\cNorthWest{\Xcur}{$-1$}{l}}

\Xcur=#1\multiply\Xcur by 2%
 \put(2,-\Ycur){\cSouth{\Xcur}{$\,-1$}{l}}
 \put(-3,-\Ycur){\cNorth{\Xcur}{#6}{r}}
\Qz=#1\multiply\Qz by 2\advance\Qz by 12%
 \put(0,-\Ycur){\WEArrow{\Xcur}{\Qz}{$-1$}}
\end{picture}%
}\end{picture}} 

\medskip\noindent{\bf Semantics.}
As usual, we interpret symbolic heaps in a stack-and-heap model; for convenience we consider both locations to be natural numbers, and values to be either natural numbers or the non-addressable null value $\nilv$. Thus a \emph{stack} is a function $s\colon\var \rightarrow \nat \cup \{\nilv\}$. We extend stacks over terms as usual:
$s(n) = n$, $s(\nil)=\nilv$ and $s(t_1 + t_2) = s(t_1) + s(t_2)$. 
If $s$ is a stack, $z \in \var$ and $v$ is a value, we write $s[z \mapsto c]$ for the stack defined as $s$ except that $s[z \mapsto v](z) = v$. We extend stacks pointwise over term tuples.

A \emph{heap} is a finite partial function $h\colon \nat \finpartialfn \nat$ mapping finitely many locations to values;
we write $\dom{h}$ for the domain of $h$, and $e$ for the empty heap that is undefined on all locations.
We write $\circ$ for \emph{composition} of domain-disjoint heaps: if $h_1$ and $h_2$ are heaps, then $h_1 \circ h_2$ is the union of $h_1$ and $h_2$ when $\dom{h_1}$ and $\dom{h_2}$ are disjoint, and undefined otherwise.

\begin{definition}
\label{defn:satisfaction}
The \emph{satisfaction relation} $s,h\models A$, where $s$ is a stack, $h$ a heap and $A$ a symbolic heap, is defined by structural induction on $A$. 
\[\begin{array}{l@{\hspace{0.1cm}}c@{\hspace{0.2cm}}l}
s,h \models t_1 \sim t_2 & \Leftrightarrow & s(t_1) \sim s(t_2) \text{ where } \sim \text{ is $=$,$<$ or $\leq$} \\
s,h \models \neg \Pi & \Leftrightarrow & s,h \not\models \Pi \\
s,h \models \Pi_1 \land \Pi_2 & \Leftrightarrow & s,h \models \Pi_1 \text{ and } s,h\models \Pi_2 \\
s,h \models \Pi_1 \vee \Pi_2 & \Leftrightarrow & s,h \models \Pi_1 \text{ or } s,h\models \Pi_2 \\
s,h \models \emp & \Leftrightarrow & h = e \\
s,h \models \psto{t_1}{t_2} & \Leftrightarrow & \dom{h}= \set{s(t_1)}\mbox{ and } h(s(t_1)) = s(t_2) \\
s,h \models F_1 * F_2 & \Leftrightarrow &
    \exists h_1,h_2.\ h=h_1 \circ h_2 \mbox{ and } s,h_1 \models F_1 \mbox{ and } s,h_2 \models F_2 \\
s,h \models \exists \vec{z}.\ \Pi : F & \Leftrightarrow& \exists\vec{m} \in \nat^{|\vec{z}|}.\ s[\vec{z} \mapsto \vec{m}],h \models \Pi \mbox{ and } s[\vec{z} \mapsto \vec{m}],h \models F
\end{array}\]%
\end{definition}

\section{Satisfiability}   
\label{s-SAT}\label{sec:satisfiability}
 Here we establish upper and lower complexity for the satisfiability
 problem in both the minimal and full variants of our $\SL$ pointer
 arithmetic.

\begin{definition}
\label{defn:gamma}  
 Let\/ $A$ be a symbolic heap of the form\ \
  $$\Pi_A : \textstyle\bigsepstar^\ell_{i=1} t_i\mapsto t_i'$$%
 We describe the heap models\ \mbox{$(s,h)$}\  of\/~$A$ by means of the
 following Presburger
 formula~$\gamma_A$ obtained
 by enriching the pure part\/~$\Pi_A$
 with the constraints on that\/ $t_i$, the allocated
 addresses, must be distinct
 (here $x_1$,..,$x_n$ is the list of all variables):
\begin{equation}%
 \gamma_A({x}_1,..,{x}_n)
 \defeq\ \Pi_A
\ \wedge
  \bigwedge_{1 \leq i<j \leq \ell} ((t_i\leq t_j-1) \vee (t_j\leq t_i-1))\ .
     \label{eq-gamma} 
\end{equation}%
 The above $\gamma_A$ can be easily
 rewritten as a Boolean combination
 of elementary formulas of the form 
 \ \ \mbox{$(x'\leq x+k)$},\ 
 where the `offset' $k$ is a variable or an integer.
\end{definition}

\begin{lemma}\label{l-gamma}
 Any model\/ \mbox{$(s,h)$} for $A$ can be transformed into
 a model for $\gamma_A$, and vice versa.
\end{lemma}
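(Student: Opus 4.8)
The plan is to prove the two directions separately, in each case exploiting that the heap in any model of $A$ is forced to decompose into exactly $\ell$ single-cell heaps whose domains are the singletons $\{s(t_i)\}$, and that the disjointness of these domains is precisely the information recorded by $\gamma_A$.

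First I would unfold the semantics of $A = \Pi_A : \bigsepstar_{i=1}^\ell t_i\mapsto t_i'$. By the clauses for $*$ and $\mapsto$ in Definition~\ref{defn:satisfaction}, $(s,h) \models A$ holds iff $s$ satisfies $\Pi_A$ and $h = h_1 \circ \cdots \circ h_\ell$ where each $h_i$ satisfies $\dom{h_i} = \{s(t_i)\}$ and $h_i(s(t_i)) = s(t_i')$. For the forward direction, given such a model I keep the stack $s$ and discard $h$: since $\circ$ is defined only on domain-disjoint heaps, the singleton domains $\{s(t_i)\}$ are pairwise disjoint, so $s(t_i) \neq s(t_j)$ whenever $i \neq j$. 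As the allocated addresses $s(t_i)$ are natural numbers, $s(t_i) \neq s(t_j)$ is equivalent to $s(t_i) \leq s(t_j) - 1$ or $s(t_j) \leq s(t_i) - 1$; hence $s$ validates every conjunct of the big conjunction, and together with $\Pi_A$ this gives a model of $\gamma_A$.

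For the converse I would reconstruct a witnessing heap from a Presburger model $s$ of $\gamma_A$. Define $h$ by $\dom{h} = \{s(t_1), \ldots, s(t_\ell)\}$ and $h(s(t_i)) = s(t_i')$. This is a well-defined partial function precisely because $\gamma_A$ guarantees the $s(t_i)$ are pairwise distinct, so no address is assigned two conflicting values; splitting $h$ into the single-cell heaps $h_i$ then witnesses $(s,h) \models \bigsepstar_{i=1}^\ell t_i\mapsto t_i'$, while the conjunct $\Pi_A$ of $\gamma_A$ supplies the pure part, so $(s,h) \models A$.

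The only genuinely delicate point, and the step I would treat most carefully, is the reconciliation of the two notions of ``model'': a stack-and-heap pair for $A$ versus a purely numerical valuation for the Presburger formula $\gamma_A$. Here I would observe that the addresses $t_i$ are built only from variables and offsets (never from $\nil$), so in any model of $A$ each $s(t_i)$ is necessarily a location, i.e.\ a natural number, which is exactly the regime over which $\gamma_A$ is interpreted; this is what makes the encoding of $s(t_i) \neq s(t_j)$ by the inequalities $t_i \leq t_j - 1 \vee t_j \leq t_i - 1$ (equivalently $t_i + 1 \leq t_j \vee t_j + 1 \leq t_i$) faithful. The values $s(t_i')$, which may be $\nilv$, play no role in $\gamma_A$ and are simply carried along by the heap construction.
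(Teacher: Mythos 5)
Your proposal is correct and follows essentially the same route as the paper's own proof: extract the stack and read off the pairwise distinctness of the allocated addresses from the disjointness of $\circ$ in one direction, and reassemble the heap as the disjoint collection of cells $s(t_i)\mapsto s(t_i')$ in the other. Your additional care about the addresses being natural numbers (so that $\neq$ is faithfully captured by the two inequalities) is a sound elaboration of a point the paper leaves implicit.
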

\begin{proof}
 By definition,
 given an\ \mbox{$(s,h)$},\ a model for~$A$, %
 we have\ \mbox{$\Pi_A(s(x_1),..,s(x_n))$}\ is true,
 and $h$ is the disjoint collection of the corresponding cells:
\begin{equation}%
  h =  \bigsepstar^\ell_{i=1} s(t_i)\mapsto s(t_i')
    \label{eq-heap-A}
\end{equation}%
 which implies that\
 \mbox{$\bigwedge_{1 \leq i<j \leq \ell}\ (s(t_i)\neq s(t_j))$}\ .

 Conversely, assume a mapping $s$ provides
 an evaluation \mbox{$(s(x_1),..,s(x_n))$} which makes $\gamma_A$
 true. Then\ \mbox{$\Pi_A(s(x_1),..,s(x_n))$}\ is true,
 and, in addition, we can take a heap $h_A$
 as the disjoint collection of the cells
 in accordance with~(\ref{eq-heap-A}),
 which provides: \mbox{$(s,h_A)\models A$}.
\end{proof}  

\begin{corollary}\label{c-SAT-NP-full}
 Satisfiability is in $\NP$.
\end{corollary}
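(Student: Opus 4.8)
The plan is to reduce satisfiability of a symbolic heap to the satisfiability of its associated Presburger formula $\gamma_A$ and then to exploit the well-known fact that such formulas admit small solutions. First I would invoke Lemma~\ref{l-gamma}: a symbolic heap $A = \exists \vec{z}.\ \Pi_A : \bigsepstar^\ell_{i=1} t_i\mapsto t_i'$ has a model $(s,h)$ if and only if $\gamma_A$ has a satisfying assignment, and moreover the lemma reconstructs an explicit heap $h_A$ from any such assignment. Since both the free variables and the existential witnesses $\vec{z}$ may be chosen when testing satisfiability, this reduces the whole problem to deciding whether $\exists x_1,\dots,x_n.\ \gamma_A$ holds over $\nat$; crucially, the spatial/heap part of the question has now been completely absorbed into the arithmetic, so no separate reasoning about $h$ remains.

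Next I would observe that, by Definition~\ref{defn:gamma}, $\gamma_A$ is a Boolean combination of elementary constraints of the form $x'\leq x+k$, i.e.\ a quantifier-free Presburger formula whose existential closure we must decide. Although full pointer arithmetic allows terms built by repeated addition $t+t$, a term of syntactic size $m$ produces an integer coefficient of magnitude at most $2^m$, whose binary encoding has size $O(m)=O(|A|)$; hence every coefficient occurring in $\gamma_A$ has bit-size polynomial in $|A|$. The $\NP$ algorithm then proceeds by guess-and-verify in two stages. First, nondeterministically guess a truth value for each atomic constraint of $\gamma_A$ consistent with satisfying its Boolean structure; this collapses $\gamma_A$ to a plain conjunction of linear (in)equalities over $\nat$ (negating $\leq$ yields $>$, and so on, each still linear). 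Second, appeal to the standard small-solution bound for integer linear feasibility: if the resulting conjunction is satisfiable then it has a solution whose entries have bit-size polynomial in the binary size of the system, hence polynomial in $|A|$. We then guess such an assignment and verify it in polynomial time, and Lemma~\ref{l-gamma} supplies the witnessing heap. The same argument applies a fortiori to minimal pointer arithmetic, which is a syntactic restriction of the full fragment.

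The step I expect to be the main obstacle is controlling the representation sizes so that the guess genuinely fits the $\NP$ budget: I must confirm that repeated addition never blows the coefficients beyond polynomial bit-size, and I must apply the small-solution theorem to the \emph{guessed conjunction} rather than to the original Boolean combination --- the disjunctive (Boolean) structure being discharged entirely by the first nondeterministic guess. Everything else, including the passage back and forth between $(s,h)$ and solutions of $\gamma_A$, is handled by Lemma~\ref{l-gamma}, so the correctness of the reduction is essentially routine once these size bounds are in place.
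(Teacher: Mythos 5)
Your proposal is correct and follows essentially the same route as the paper: reduce satisfiability of $A$ to satisfiability of $\gamma_A$ via Lemma~\ref{l-gamma}, then use the fact that quantifier-free Presburger satisfiability is in $\NP$. The only difference is that the paper simply cites this latter fact (\cite{Scarpellini:84}), whereas you unfold it into the standard guess-the-atoms-then-apply-small-solution-bounds argument, together with the (correct) observation that repeated addition $t+t$ only produces coefficients of polynomial bit-size.
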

\begin{proof}
 Follows from Lemma~\ref{l-gamma} and the fact
 that satisfiability for quantifier-free Presburger arithmetic
 belongs to~$\NP$~\cite{Scarpellini:84}.
\end{proof}

\noindent
 Satisfiability is shown $\NP$-hard by reduction from the
\emph{\mbox{$3$-}colourability problem~\cite{Garey-Johnson:79}}.

\begin{problem}[$3$-colourability]
Let \mbox{$G=(V,E)$} be an undirected graph with $n$~vertices
 $v_1,\dots,v_n$.
 The {\em \mbox{$3$-}colourability problem} is to decide
 if there is a \mbox{$3$-}colouring of its vertices
 such that no two adjacent vertices share the same colour.
\end{problem}

\begin{definition}  
\label{defn:3colour_to_sat}  
 Let \mbox{$G=(V,E)$} be an instance graph with $n$~vertices.
 We encode the perfect \mbox{$3$-}colourings of~$G$ with
 the following symbolic heap $A_{G}$.

 We use $c_{i}$ to denote one of the colours, $1$, $2$, or~$3$,
 the vertex~$v_i$ is marked by.

 To encode the fact that no two adjacent vertices $v_i$ and $v_j$ share
 the same colour, we use $c_{i}$ and $c_{j}$ 
 as the addresses, relative to the base-offset $e_{ij}$,
 for two disjoint cells.
 To ensure that all cells allocated
 in question are disjoint,
 with\ \mbox{$1\leq i,j\leq n$},\
 we introduce the numbers $e_{ij}$ as:
\begin{equation}
     e_{ij}  =  i\cdot n^2 + j\cdot n    \label{eq-e-ij}
\end{equation}%
 Our choice is motivated, in particular, by needs of
 Definition~\ref{defn:2-3colour_to_entail} where its $B''_G$
 is guaranteed to be satisfiable whenever
 we allow memory chunks of length~$n$ to accommodate any
 of $n$ distinct colours used in the trivially realizable
 \mbox{$n$-}colouring problem.

\begin{proposition}\label{p-e-ij}  
 Let pairs\ \mbox{$(i,j)$}\ and \ \mbox{$(i',j')$}\ be distinct.
 Then \ \mbox{$|e_{i',j'} - e_{ij}| \geq n$}\ \
\end{proposition}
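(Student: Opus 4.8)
The plan is to factor the common factor $n$ out of the definition $e_{ij} = i\cdot n^2 + j\cdot n = n\,(i n + j)$, which reduces the claim to a spacing statement about the simpler integer quantity $i n + j$. Concretely, for the two pairs we compute
\[
  e_{i'j'} - e_{ij} \;=\; n\bigl((i' - i)\,n + (j' - j)\bigr),
\]
so that $|e_{i'j'} - e_{ij}| = n\cdot\bigl|(i'-i)\,n + (j'-j)\bigr|$. Hence it suffices to show that the inner integer $(i'-i)\,n + (j'-j)$ is nonzero whenever $(i,j)\neq(i',j')$: being a nonzero integer, its absolute value is then at least $1$, and multiplying by $n$ yields the desired bound $|e_{i'j'}-e_{ij}|\geq n$.

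The key step is therefore to verify that the map $(i,j)\mapsto i n + j$ is injective on $\range{1}{n}\times\range{1}{n}$. First I would suppose, for contradiction, that $(i'-i)\,n + (j'-j) = 0$, i.e.\ $(i'-i)\,n = j - j'$. The left-hand side is a multiple of $n$, while the range constraint $1\leq j,j'\leq n$ forces $|j - j'| \leq n-1 < n$. The only multiple of $n$ with absolute value strictly below $n$ is $0$, so we must have $i' = i$, and then $j = j'$ follows immediately, contradicting the assumption that the pairs are distinct. This establishes injectivity and hence the proposition.

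There is no genuinely hard step here; the proposition is really just the statement that $i n + j$ is a mixed-radix (base-$n$) encoding of the pair $(i,j)$, with $j$ playing the role of the low-order digit. The one point that must be handled with care -- and which I would flag as the only subtlety -- is the boundary/window argument: injectivity depends precisely on $j$ ranging over a block of exactly $n$ consecutive integers, so that $|j-j'|\le n-1$ strictly and no ``carry'' into the $i$-coordinate can occur. The factor $n^2$ (rather than merely $n$) in the coefficient of $i$ is what guarantees that distinct encodings are separated by at least $n$ after multiplying through, which is exactly the slack later needed in Definition~\ref{defn:2-3colour_to_entail} to accommodate memory chunks of length~$n$.
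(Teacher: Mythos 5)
Your proof is correct, and it is essentially the intended argument: the paper states Proposition~\ref{p-e-ij} without proof, relying on exactly the observation you make, namely that $e_{ij}=n\,(in+j)$ and that $(i,j)\mapsto in+j$ is injective on $\range{1}{n}\times\range{1}{n}$ because $|j-j'|\leq n-1<n$ rules out any nonzero multiple of $n$ equalling $j-j'$. Nothing is missing; your boundary remark about $j$ occupying a window of exactly $n$ consecutive values is the right point to flag.
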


\noindent Formally, we define $A_G$ to be the following
 quantifier-free symbolic heap:
\comment{
$$\begin{array}{@{}l}
 \bigwedge_{i=1}^{n}(c_0+1\leq{c_{i}}\leq c_0+3)
\wedge
 \bigwedge_{(v_i,v_j)\in E}(c_0+1\leq \widetilde{c_{ij}}\leq c_0+3)
\colon 
\\ \bigsepstar_{(v_i,v_j)\in E}\,
 c_{i} + e_{ij}\mapsto \nil{}\ *\ c_{j} + e_{ij} \mapsto \nil{}\ *\
 \widetilde{c_{ij}} + e_{ij} \mapsto \nil{}
\end{array}$$%
} 
\begin{equation}%
 \bigwedge_{i=1}^{n}(c_0+1\leq{c_{i}}\leq c_0+3)\colon
 \bigsepstar_{(v_i,v_j)\in E}\,
 c_{i} + e_{ij}\mapsto \nil{}\ *\ c_{j} + e_{ij} \mapsto \nil{}
                        \label{eq-sat-NP-hard}
\end{equation}%
Notice that $A_G$ is in
 minimal pointer arithmetic{}. 
\end{definition}


\begin{lemma}\label{lem:3color_to_sat}
 Let $G$ be an instance of the \mbox{$3$-}colouring problem.
 Then $A_G$ from Definition~\ref{defn:3colour_to_sat}
 is satisfiable iff there is a perfect \mbox{$3$-}colouring of\/~$G$.
\end{lemma}   
\begin{proof}
 Any perfect \mbox{$3$-}colouring of\/~$G$, with vertices~$v_i$
 labelled by colours~$b_i$, yields a model \mbox{$(s,h)$}
 for $A_G$ with a stack~$s$ defined as\ \mbox{$s(c_i)=s(c_0)+b_i$}.\
 The corresponding cells,\ \
 \mbox{$s(c_{i}) + e_{ij}\mapsto \nil{}$},\ \
 are all disjoint because of Proposition~\ref{p-e-ij}.  

 Conversely, given a model \mbox{$(s,h)$} for $A_G$, we label each
 of the vertices~$v_i$ by the colour\ \mbox{$b_i = s(c_i)-s(c_0)$},
 providing a perfect \mbox{$3$-}colouring of\/~$G$.
\end{proof}  


\begin{theorem}\label{t-SAT-NP-1}  
 Satisfiability is\ \mbox{$\NP$-}hard,
 even for quantifier-free symbolic heaps $A$
 in minimal pointer arithmetic.
\end{theorem}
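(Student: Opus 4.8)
The plan is to prove $\NP$-hardness by exhibiting a polynomial-time many-one (Karp) reduction from the $3$-colourability problem, which is $\NP$-complete~\cite{Garey-Johnson:79}. The reduction is exactly the map $G \mapsto A_G$ of Definition~\ref{defn:3colour_to_sat}: on input an undirected graph $G=(V,E)$, output the symbolic heap $A_G$ of equation~(\ref{eq-sat-NP-hard}), so that positive instances map to satisfiable heaps and negative instances to unsatisfiable ones. Its correctness is already in hand: Lemma~\ref{lem:3color_to_sat} states that $A_G$ is satisfiable iff $G$ admits a perfect $3$-colouring, and the remark following Definition~\ref{defn:3colour_to_sat} observes that $A_G$ is quantifier-free and lies in minimal pointer arithmetic (it uses only conjunctions of difference constraints $c_0+1 \leq c_i \leq c_0+3$ in its pure part and $*$-conjoined points-to formulas $c_i + e_{ij} \mapsto \nil$ in its spatial part, with no quantifiers and no Boolean connectives beyond $\wedge$). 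Hence the only thing left to verify is that the map $G \mapsto A_G$ is computable in polynomial time.

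First I would bound the size of $A_G$. For a graph with $n$ vertices the pure part contributes $n$ double-inequalities and the spatial part contributes $2|E|$ pointer formulas, one pair per edge. Each such formula is built from a fixed number of symbols plus an offset, and the only large offsets are the edge constants $e_{ij}=i\cdot n^2 + j\cdot n$ from equation~(\ref{eq-e-ij}). Since $e_{ij} \leq n^3$, each is written in $O(\log n)$ bits in binary, so the total size of $A_G$ is $O((n+|E|)\log n)$, which is polynomial in the size of $G$. Computing each $e_{ij}$ takes a constant number of multiplications, so the whole reduction runs in polynomial time. Chaining this with Lemma~\ref{lem:3color_to_sat} and the $\NP$-completeness of $3$-colourability then yields the theorem.

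The only genuinely delicate point --- and the step a careless reduction could get wrong --- is this size accounting for the offsets. The separation of the allocated address ranges is enforced by the cubic constants $e_{ij}$ together with Proposition~\ref{p-e-ij}; naively one might fear that forcing $2|E|$ cells to sit in pairwise-disjoint windows would require formula data of size polynomial in $n^3$ rather than in $\log n$. The resolution is precisely that minimal pointer arithmetic admits arbitrary constant offsets $k$ in terms $t+k$, so the geometric spreading of the cells is paid for only in the logarithmic bit-length of the constants, not in the number of formulas. I expect this bookkeeping to be the main, though modest, obstacle; the substantive correctness content has already been discharged by Lemma~\ref{lem:3color_to_sat}.
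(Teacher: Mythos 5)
Your proposal is correct and takes essentially the same route as the paper, which proves the theorem directly from Lemma~\ref{lem:3color_to_sat} via the reduction $G \mapsto A_G$ of Definition~\ref{defn:3colour_to_sat}. The only material you add is the explicit (and accurate) verification that the reduction is polynomial-time, including the observation that the offsets $e_{ij} \leq n^3 + n^2$ cost only $O(\log n)$ bits each, a point the paper leaves implicit.
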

\begin{proof}
 From Lemma~\ref{lem:3color_to_sat}. 
\end{proof}

\begin{corollary}\label{c-SAT-NP}
 Satisfiability is\ \mbox{$\NP$-}complete,
 even for quantifier-free symbolic heaps $A$
 in minimal pointer arithmetic.
\end{corollary}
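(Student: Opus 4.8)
The plan is to obtain \mbox{$\NP$-}completeness as the usual sandwich of a matching upper and lower bound, both of which are already in hand from the preceding results. Recall that a problem is \mbox{$\NP$-}complete exactly when it lies in $\NP$ and is \mbox{$\NP$-}hard; so all that remains is to check that the upper bound covers the class in question and that the lower bound already applies to that same class.

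For the upper bound, I would appeal directly to Corollary~\ref{c-SAT-NP-full}, which places satisfiability in $\NP$ for the \emph{full} fragment. Since every quantifier-free symbolic heap in minimal pointer arithmetic is in particular a symbolic heap of the full language, membership in $\NP$ holds a fortiori for the restricted class. Thus the cheaper, more restrictive fragment inherits the \mbox{$\NP$} upper bound of the richer one for free, with no additional argument required.

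For the lower bound, I would invoke Theorem~\ref{t-SAT-NP-1}, which already establishes \mbox{$\NP$-}hardness \emph{precisely} for quantifier-free symbolic heaps in minimal pointer arithmetic, via the reduction from \mbox{$3$-}colourability encoded in Definition~\ref{defn:3colour_to_sat} and verified by Lemma~\ref{lem:3color_to_sat}. Combining this with the membership just noted yields \mbox{$\NP$-}completeness, and the completeness holds already for the restricted class because the upper bound is proved for the largest fragment while the hardness is proved for the smallest, so the two bounds pinch together exactly at the quantifier-free minimal case claimed by the corollary.

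I do not expect any genuine obstacle here: the corollary is a routine consequence of the two bracketing results, and the only point meriting a moment's care is the asymmetry of the enclosing classes---checking that ``satisfiability in $\NP$'' is stated for a superset of, and ``\mbox{$\NP$-}hardness'' for a subset of (in fact equal to), the class named in the statement---so that the resulting completeness is legitimately asserted \emph{even for} the restricted fragment rather than merely for the full one.
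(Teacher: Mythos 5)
Your proposal is correct and matches the paper's (implicit) argument exactly: the corollary is obtained by combining the $\NP$ membership of Corollary~\ref{c-SAT-NP-full}, which applies a fortiori to the minimal quantifier-free fragment, with the $\NP$-hardness of Theorem~\ref{t-SAT-NP-1} proved for that same fragment. Nothing further is needed.
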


\subsection{About the small model property} 
\label{s-SAT-small-model}


 As for the size of models for symbolic heaps
 in Corollary~\ref{c-SAT-NP-full}, we establish
 the following small model property
 (that is~\cite{Antonopoulos-etal:14},
 any satisfiable formula~$A$ has a model of size polynomial
 in the size of\/~$A$)
 but not for full pointer arithmetic,
 cf.~Remark~\ref{r-small-model}.

\begin{remark}\label{r-small-model}  
 On the contrary, no small model property is valid
 whenever we allow \mbox{$x\leq x'+k$},\
 with $k$ being a variable. 

\noindent
 Let $A_n$ be a symbolic heap of the form (here \mbox{$k_0=0$})
 $$ A_n \defeq\ 
  \bigwedge_{i=0}^{n-1} (x_{i+1} = c_0+k_{i+1}> x_{i}+k_{i}) \colon
  \bigsepstar_{i=1}^{n} x_{i}\mapsto \nil{}
 $$%
 Then we have that\
 \mbox{$\bigwedge_{i=0}^{n-1} (s(k_{i+1}) > 2s(k_{i}))$}\
 for any model \mbox{$(s,h)$} of $A_n$, which implies
\ \mbox{$\bigwedge_{i=0}^{n-1} (s(x_{i+1}) > 2^{i+1})$}.\
 Thus, all models of~$A_n$ necessarily require
 (the distances between) at least a half of addresses in~$h$
  to be of exponential size.
\qed
\end{remark}

\comment{
\begin{equation}%
 \gamma_A({x}_1,..,{x}_n)
 \defeq\ \Pi_A
\ \wedge
  \bigwedge_{1 \leq i<j \leq \ell} ((t_i\leq t_j-1) \vee (t_j\leq t_i-1))\ .
                             \label{eq-gamma}
\end{equation}%
} 


 In order to prove the small model property,
 we need a more workable specification of $\gamma_A$:
\begin{definition}
\label{defn:gamma-1}
 Let\/ $A$ be a symbolic heap under constraints
 from Theorem~\ref{t-small}.
 Then we rewrite its $\gamma_A$ (see Definition~\ref{defn:gamma})
 as
\begin{equation}%
 \gamma_A({x}_1,..,{x}_n)
 \equiv f_A(Z_1, Z_2,\dots, Z_m)
                           \label{eq-gamma-f}
\end{equation}%
 where\ \mbox{$f_A(z_1, z_2,.., z_m)$}\
 is a Boolean function, and within~(\ref{eq-gamma-f})
 the Boolean variable~$z_i$ is substituted with
 $Z_i$ of the form\ ``\mbox{$x_i'\leq x_i+k_i$}''
 where $k_i$ is a fixed integer.
\end{definition}

\begin{proposition}\label{p-gamma}
 Any model \mbox{$(s,h)$} for a symbolic heap $A$ can be determined by
 a Boolean vector \ \mbox{$\bar\zeta = \zeta_1, \zeta_2,.., \zeta_m$}\
 such that  \ \mbox{$f_A(\zeta_1,\zeta_2,..,\zeta_m) = \top$}\
 and the following system, $\gamma_{A,\bar\zeta}$,
 has an integer solution:
\begin{equation}%
\left\{\begin{array}{lcl}{}%
      Z_1 & \equiv & \zeta_1,
\\{}%
      Z_2 & \equiv & \zeta_2,
\\{}%
      \dots & \dots & \dots,
\\{}%
      Z_m & \equiv & \zeta_m\ .
\end{array}\right.
                                  \label{eq-gamma-system}
\end{equation}%
\end{proposition}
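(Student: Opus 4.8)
The plan is to read the Boolean vector $\bar\zeta$ directly off the given model and then verify that both required properties hold essentially by construction, the only substantive point being that the induced system stays inside the class of difference constraints. I would start from a model $(s,h)$ of $A$. By Lemma~\ref{l-gamma} this yields an integer evaluation $(s(x_1),\dots,s(x_n))$ under which $\gamma_A$ is true, and by Definition~\ref{defn:gamma-1} this means $f_A(Z_1,\dots,Z_m)$ evaluates to $\top$ once each atom $Z_i \equiv$ ``$x_i' \leq x_i + k_i$'' is assigned its truth value under $s$. I therefore define $\zeta_i \defeq \top$ if $s(x_i') \leq s(x_i) + k_i$ and $\zeta_i \defeq \bot$ otherwise. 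Since evaluation of a Boolean function commutes with substituting the already-evaluated atoms, we have $f_A(\zeta_1,\dots,\zeta_m) = \top$, which is the first required property.

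Next, the system $\gamma_{A,\bar\zeta}$ of~(\ref{eq-gamma-system}) merely asserts $Z_i \equiv \zeta_i$ for every $i$. By the very choice of $\zeta_i$, the evaluation $(s(x_1),\dots,s(x_n))$ satisfies each of these equivalences, so it is an integer solution of $\gamma_{A,\bar\zeta}$, giving the second property. Conversely --- the direction actually needed downstream for the small model argument --- any $\bar\zeta$ with $f_A(\bar\zeta) = \top$ together with an integer solution of $\gamma_{A,\bar\zeta}$ forces each $Z_i$ to take the value $\zeta_i$, whence $f_A(Z_1,\dots,Z_m) = \top$ and $\gamma_A$ holds; Lemma~\ref{l-gamma} then converts this back into a genuine model of $A$. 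So the correspondence runs in both directions, as the phrase ``determined by'' intends.

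The one step worth care, and the reason this decomposition is useful rather than merely cosmetic, is the shape of $\gamma_{A,\bar\zeta}$. When $\zeta_i = \top$ the constraint $Z_i \equiv \zeta_i$ is the difference constraint ``$x_i' \leq x_i + k_i$''; when $\zeta_i = \bot$ it is its negation ``$x_i' > x_i + k_i$'', which, over the integers, rewrites to the difference constraint ``$x_i \leq x_i' - k_i - 1$''. Hence, regardless of $\bar\zeta$, the system $\gamma_{A,\bar\zeta}$ is again a pure conjunction of difference constraints, so its integer solvability is governed entirely by the circular-sum criterion of Proposition~\ref{p-Cormen}. I do not expect a genuine obstacle here: the real content is simply recording that satisfiability of $\gamma_A$ factors cleanly as ``choose a satisfying Boolean assignment $\bar\zeta$ for $f_A$, then test whether the induced difference-constraint system is integer-solvable'', and it is precisely this closure of difference constraints under negation that keeps the per-$\bar\zeta$ check tractable and ultimately drives the bound on model size.
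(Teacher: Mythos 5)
Your proposal is correct and follows essentially the same route as the paper's (much terser) proof: read $\bar\zeta$ off the model by evaluating each atom $Z_i$ under the stack $s$, whence $f_A(\bar\zeta)=\top$ by Lemma~\ref{l-gamma} and $s$ itself solves $\gamma_{A,\bar\zeta}$. Your extra observations --- the converse direction and the rewriting of $\neg(x_i'\leq x_i+k_i)$ as the difference constraint $x_i\leq x_i'-k_i-1$ --- are exactly what the paper records separately in Definition~\ref{d-graph}, so nothing is out of step.
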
%
\begin{proof}
 Given a model \mbox{$(s,h)$} of\/~$A$, we can evaluate
 each of the\/~$Z_i$,
 and then calculate
 the appropriate\ \mbox{$\bar\zeta = \zeta_1, \zeta_2,.., \zeta_m$}
 by means of the equations in~(\ref{eq-gamma-system}).
\end{proof}  

\def\node#1{\widehat{#1}}

\begin{definition}\label{d-graph}
 In its turn, the system $\gamma_{A,\bar\zeta}$,
 (\ref{eq-gamma-system}), will be encoded by a {\em constraint graph},
 $\widetilde{G}_{A,\bar\zeta}$, constructed as follows.

 With each variable~$x_i$, we will associate the node labelled
 by $\node{x_i}$.

 In the case of \mbox{$Z_i\equiv \zeta_i \equiv\top $},
 we depict the arrow from the node~$\node{x_i}$
 to the node~$\node{x_i'}$ and label it with~$k_i$.

 In the case of \mbox{$Z_i\equiv \zeta_i \equiv \bot $},
 which means that \mbox{``$x_i\leq x'_i-k_i-1$''},\
 we depict the opposite arrow from the node~$\node{x_i'}$
 to the node~$\node{x_i}$ and label it with the number
 \ \mbox{$-k_i-1$}.\

 To provide the connectivity we need, we will add, if necessary,
 a ``maximum node'' $\node{x_0}$, with the constraint
 \ \mbox{``$ x_i \leq x_0$''} for all $x_i$.
 Cf. Figure~\ref{f-SAT-min}. 
\end{definition}

\begin{example}\label{e-d-graph} 
 Let\/ $A$ be a symbolic heap of the form:\
$$ \mbox{$(y\leq x) \colon x\mapsto \nil{}\ *\ y\mapsto \nil{}$},$$%
 with its $\gamma_A$ being of the form:\
 \mbox{$(y\leq x) \wedge  ((x\leq y-1) \vee (y\leq x-1))$}.

\noindent
 Clearly,
 \mbox{$\gamma_A(x,y) \equiv \gamma_1(x,y)\vee \gamma_2(x,y)$},
 where
 $$\gamma_1(x,y) = (y\leq x) \wedge(x\leq y-1),\quad
    \gamma_2(x,y) = (y\leq x) \wedge(y\leq x-1). $$
 In Figure~\ref{f-SAT-min}  
 we show the constraint graphs for $\gamma_1$ and $\gamma_2$, resp.
 Notice that, because of\ \mbox{$y\leq x$}, the node
 $\node{x}$ is a ``maximum node'' in both cases.

 In the case of~(a), we have no solution. Namely,
 there is a negative cycle of the form\
 \mbox{$\node{x}\, \stackrel{0}{\longrightarrow}\,
 \node{y}\, \stackrel{-1}{\longrightarrow}\node{x}\,$},
 which provides a contradictory\ \mbox{$x\leq x-1$}.

 In the case of~(b), the minimal weighted path from
 $\node{x}$ to $\node{y}$ is of the weight~$-1$,
 which guarantees that \mbox{$y=x-1$}
 is a model for $\gamma_A$ and thereby for~$A$.
\end{example}


\begin{figure*}[t]
\begin{center}%
\hspace*{\fill}
     {$\stackrel{{\smallModelYX{24}}} 
  {(a)\ \ \mbox{$\gamma_1 = (y\leq x) \wedge (x\leq y-1)$}}$}  
\hspace*{\fill}
     {$\stackrel{\smallModelXY{24}}
  {(b)\ \ \mbox{$\gamma_2 = (y\leq x) \wedge (y\leq x-1)$}}$}
\hspace*{\fill}
\phantom{xx}
\end{center}
\caption{The small model property:
 The constraint graphs for a symbolic heap $A$
 of the form:\
 \mbox{$(y\leq x) \colon x\mapsto \nil{}\ *\ y\mapsto \nil{}$},
 with its corresponding $\gamma_A$ of the form
 \mbox{$(y\leq x) \wedge ( (x\leq y-1) \vee (y\leq x-1))$}.
}
\label{f-SAT-min}
\end{figure*}

\begin{theorem}[``the small model property'']
\label{c-SAT-small-model}\label{t-small}
 Let\/ $A$ be a satisfiable symbolic heap
 in minimal pointer arithmetic.
 Then we can find a model \mbox{$(s,h)$} for\/~$A$
 in which all values are bounded by\/~$M$,
 which it suffices to take as:\  \mbox{$M=\sum_i (|k_i|+1) $},\
 where $k_i$ ranges over all occurrences of numbers occurred in\/~$A$.
\end{theorem}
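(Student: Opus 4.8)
The plan is to convert the statement into a purely arithmetical fact about difference constraints and then exploit the shortest-path characterisation of their solutions. Since $A$ is satisfiable, Lemma~\ref{l-gamma} supplies a model of the associated formula $\gamma_A$, and Proposition~\ref{p-gamma} turns this into a Boolean vector $\bar\zeta = \zeta_1,\dots,\zeta_m$ with $f_A(\bar\zeta)=\top$ for which the conjunctive system $\gamma_{A,\bar\zeta}$ of difference constraints (in the sense of Definition~\ref{defn:gamma-1}) has an integer solution. It therefore suffices to produce a \emph{bounded} integer solution $s$ of $\gamma_{A,\bar\zeta}$: any such $s$ satisfies $\gamma_A$, so the addresses $s(t_i)$ are pairwise distinct, and assembling the cells $s(t_i)\mapsto\nilv$ into a disjoint heap $h$ gives $(s,h)\models A$ by Lemma~\ref{l-gamma}.

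First I would pass to the constraint graph $\widetilde{G}_{A,\bar\zeta}$ of Definition~\ref{d-graph}, including the maximum node $\widehat{x_0}$ joined to each $\widehat{x_i}$ by a weight-$0$ edge recording $x_i\leq x_0$. Because $\gamma_{A,\bar\zeta}$ is satisfiable, Proposition~\ref{p-Cormen} guarantees that the graph contains no negative cycle, so the shortest-path weights $\delta(x_0,x_i)$ from the source are all well defined and finite. The standard solution for systems of difference constraints then certifies that $s(x_i)\defeq\delta(x_0,x_i)$ is an integer solution: for every edge $\widehat{x_a}\to\widehat{x_b}$ of weight $k$ we have $\delta(x_0,x_b)\leq\delta(x_0,x_a)+k$, which is exactly the corresponding constraint $x_b\leq x_a+k$ read off the graph.

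Next I would extract the size bound. With no negative cycle present every shortest path is simple, hence traverses each edge at most once, so its weight is bounded in absolute value by the sum of the absolute values of all edge weights. Each of the $m$ constraints $Z_i$ contributes a single edge, of weight $k_i$ (if $\zeta_i=\top$) or $-k_i-1$ (if $\zeta_i=\bot$), whose magnitude is at most $|k_i|+1$; the source edges contribute $0$. Summing gives $|\delta(x_0,x_i)|\leq\sum_i(|k_i|+1)=M$ for every $i$, and since all source edges have weight $0$ we also have $\delta(x_0,x_i)\leq 0$. Shifting by $M$, that is taking $s(x_i)\defeq\delta(x_0,x_i)+M$, preserves every constraint (each depends only on a difference $x_b-x_a$) and places every variable value in the range $[0,M]$, in particular in $\nat$, so $s$ is a genuine stack.

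The main obstacle I anticipate is the bookkeeping of the offsets rather than the graph-theoretic core. One must verify that the edge weights $k_i$ occurring in $\gamma_{A,\bar\zeta}$ are genuinely controlled by the numbers literally appearing in $A$, so that the abstract bound $\sum_i(|k_i|+1)$ matches the claimed $M$; this means tracking how the constant offsets of the terms $t_i = x+K$ combine as differences, together with the extra $\pm 1$ coming from the disjointness disjuncts and the $\bot$-case, as one telescopes a simple path, and checking that the allocated addresses $s(t_i)=s(x)+K$ remain within the stated bound. The remaining ingredients --- existence and correctness of the shortest-path solution and disjointness of the resulting cells --- follow directly from Proposition~\ref{p-Cormen} and Lemma~\ref{l-gamma}.
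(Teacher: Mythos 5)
Your proposal is correct and takes essentially the same route as the paper's own proof: both pass via Proposition~\ref{p-gamma} to a satisfiable conjunctive system $\gamma_{A,\bar\zeta}$, invoke Proposition~\ref{p-Cormen} to rule out negative cycles in the constraint graph with the maximum node $\widehat{x_0}$ as source, and set $s(x_i)=M+d_i$ with $d_i$ the minimal weighted path from $\widehat{x_0}$ to $\widehat{x_i}$. Your version merely spells out in more detail why $|d_i|\leq M$ (simple paths, summing edge-weight magnitudes) and flags the offset bookkeeping, both of which the paper leaves implicit.
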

\begin{proof}
 According to Proposition~\ref{p-gamma},
 there is a Boolean vector
 \ \mbox{$\bar\zeta = \zeta_1, \zeta_2,.., \zeta_m$}\
 such that 
 the corresponding system, $\gamma_{A,\bar\zeta}$, has a solution.
 Hence, the associated constraint graph,
 $\widetilde{G}_{A,\bar\zeta}$, has no negative cycles,
 see Definition~\ref{d-graph} and Proposition~\ref{p-Cormen}.

 We define our small model with the following mapping~$s$
 with providing an evaluation \mbox{$(s(x_1),..,s(x_n))$}
 which makes $\gamma_A$ true.
 First we define that\ \mbox{$s(x_0)= M$},
 for the ``maximum node''\ $\node{x_0}$
 - so that \ \mbox{$x_i \leq x_0$}\ for all $x_i$.
 Then \mbox{$s(x_i)$} is defined as:\ \mbox{$M+d_i$},\ where
 $d_i$ is the minimal weighted path
 leading from $\node{x_0}$ to $\node{x_i}$.

\noindent
 E.g., 
 in Example~\ref{e-d-graph} 
 the small model
 is given by \mbox{$s(x)= M$}, and\  \mbox{$s(y)= M-1$}.
\end{proof}  

\begin{remark}\label{r-SAT}
 Contrary to Remark~\ref{r-small-model},
 Theorem~\ref{c-SAT-small-model} 
 is valid even for full pointer arithmetic,
 whenever we confine ourselves to the pointer terms
 of the form \mbox{$x+k_0$},
 with $k_0$ being a fixed base-offset,
 but any Boolean combinations
 of the elementary formulas\ \mbox{$(x'=x+k_0)$},\
 \mbox{$(x'\leq x+k_0)$},\ and\ \mbox{$(x'<x+k_0)$},
 are allowed.

 In addition, the corresponding polytime sub-procedures are running
 as the shortest paths procedures with negative weights allowed
 (e.g., Bellman-Ford algorithm),
 with providing polynomials of low degrees.
\comment{
 In a practically important case where
 we allow \mbox{$x'\leq x+k_0$},\ etc.,
 with $k_0$ being a fixed base-offset,
 we can guarantee the small model property
 (that is
 any satisfiable formula~$A$ has a model of size polynomial
 in the size of\/~$A$)
 even for the full pointer arithmetic but with a fixed pointer offset.

 and, on top of that,
 we allow spatial formulas of the arrays the length
 of which is $\leq k_0$
 and lists which length is $\leq k_0$
  where $k_0$ is a fixed integer. See Theorem~\ref{t-P2-upper}.

 we allow any Boolean combinations
 of the elementary formulas\ \mbox{$(x'=x+k_0)$},\
 \mbox{$(x'\leq x+k_0)$},\ and\ \mbox{$(x'<x+k_0)$},
  where $k_0$ is a fixed integer.
} 
\end{remark}

\comment{
We now focus on \emph{entailment} for $\ASL$.  We establish an upper
bound of $\Pi^{\mathrm{EXP}}_1$ in the \emph{weak $\EXP$
  hierarchy}~\cite{Hartmanis85} via an encoding into $\Pi^0_2$ $\pres$,
and a lower bound of $\Pi^P_2$~\cite{Stockmeyer:77}. Moreover,
quantifier-free entailment is $\CoNP$-complete.

\begin{ourproblem}{Entailment for $\ASL$}
 Given symbolic heaps $A,B$, decide if \mbox{$A \models B$}. $A$ may be
 considered quantifier-free; similar to Prop.~\ref{prop:biabd_qf}, the
 existential quantifiers in $B$ may not mention variables appearing in
 the RHS of a $\mapsto$-formula.
\end{ourproblem}
} 

\section{Entailment}
\label{sec:entailment}

 We now focus on {\em the entailment problem}:
\ \mbox{$A\models B$}\ iff
 every model \mbox{$(s,h)$} of\/~$A$ is also a model of\/~$B$.

\begin{definition} \label{d-entail-1}\label{d-entail-1-1}

\noindent
 Let\/ $A$ be a symbolic heap of the form
 $$\mbox{$\Pi_A\colon \bigsepstar^\ell_{i=1} t_i\mapsto t'_i$},$$%
 and\/ $B$ be a symbolic heap of the form\ \
  $$ \mbox{$\exists\bar{y}\,\Pi_B \colon
   \bigsepstar^{\ell'}_{j=1} u_j\mapsto u'_j$},$$%
 both $A$ and\/ $B$ are symbolic heaps
 in the minimal pointer arithmetic.

\noindent
 We express validity of\ \mbox{$A\models B$}, that is,
 every model \mbox{$(s,h)$} of~$A$ is also a model of~$B$,
 by means of the formula
 $\varepsilon_{A,B}$:
\begin{equation}%
  \varepsilon_{A,B}\ =\ 
 \forall \bar x\, (\gamma_A(\bar x) \to
 \exists\bar{y}\,
 (\gamma_B(\bar x, \bar y) \wedge iso(\bar x, \bar y)))
                    \label{eq-entail-0} 
\end{equation}%
 where the following formula,\ \mbox{$ iso(\bar x, \bar y)$},\
 establishes an isomorphism between
 the disjoint collection of the cells:
 \ \ \mbox{$\bigsepstar^\ell_{i=1} t_i\mapsto t'_i$},\ \
 and
 the disjoint collection of the cells:
 \ \ \mbox{$\bigsepstar^{\ell'}_{j=1} u_j\mapsto u'_j$},\
\begin{equation}%
 iso(\bar x, \bar y) =
  \bigwedge_i\bigvee_j ((t_i = u_j)\wedge(t'_i = u'_j))
 \wedge
  \bigwedge_j\bigvee_i ((u_j = t_i)\wedge(u'_j = t'_i))
     \label{eq-entail-0-iso} 
\end{equation}%

 Each of the above $\gamma_A$, $\gamma_B$, and $iso$ can be easily
 rewritten as a Boolean combination
 of elementary formulas of the form 
 \ \ \mbox{$(x'\leq x+k)$},\ 
 where the `offset' $k$ is a variable or an integer
 (in the case of minimal pointer arithmetic, $k$ is a fixed integer).

\noindent
 Thus our $\varepsilon_{A,B}$ can be rewritten as:
\begin{equation}%
 \forall x_1 \forall x_2..\forall x_n
 \exists y_1\exists y_2..\exists y_m\,
 Q(x_1,x_2,..,x_n,y_1, y_2,..,y_m)
        \label{eq-Q} 
\end{equation}%
 where $Q$ is a Boolean combination
 of elementary formulas of the form 
 \ \mbox{$(x'\leq x+k)$}. 
\end{definition} 

\begin{lemma}\label{l-epsilon}
 Any model\/ \mbox{$(s,h)$}, which is a counter-model
 for \mbox{$A\models B$}, can be transformed into
 a model for\/\ \mbox{$\neg\varepsilon_{A,B}$},
 and vice versa.
\end{lemma}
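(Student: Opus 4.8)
The plan is to reduce the claim to the heap/solution correspondence of Lemma~\ref{l-gamma} together with the observation that $iso(\bar x,\bar y)$ expresses exactly equality of the $A$-heap and the $B$-heap. First I would unfold the negation. Since $\varepsilon_{A,B}$ has the form $\forall\bar x\,(\gamma_A(\bar x)\to\exists\bar y\,(\gamma_B(\bar x,\bar y)\wedge iso(\bar x,\bar y)))$ from Definition~\ref{d-entail-1}, the formula $\neg\varepsilon_{A,B}$ is equivalent to $\exists\bar x\,(\gamma_A(\bar x)\wedge\forall\bar y\,\neg(\gamma_B(\bar x,\bar y)\wedge iso(\bar x,\bar y)))$. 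Thus a model for $\neg\varepsilon_{A,B}$ is precisely a valuation $\bar a$ of $\bar x$ with $\gamma_A(\bar a)=\top$ for which no valuation $\bar m$ of $\bar y$ satisfies $\gamma_B(\bar a,\bar m)\wedge iso(\bar a,\bar m)$, whereas a counter-model for $A\models B$ is a pair $(s,h)$ with $(s,h)\models A$ and $(s,h)\not\models B$.

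For the forward direction I would start from a counter-model $(s,h)$ and put $\bar a=s(\bar x)$. Because $(s,h)\models A$, Lemma~\ref{l-gamma} yields $\gamma_A(\bar a)=\top$ and pins down $h=\bigsepstar^\ell_{i=1} s(t_i)\mapsto s(t'_i)$. It then remains to rule out any witness $\bar m$ for $\bar y$ satisfying $\gamma_B(\bar a,\bar m)\wedge iso(\bar a,\bar m)$: writing $s'=s[\bar y\mapsto\bar m]$, truth of $\gamma_B$ would give $s'$ satisfying $\Pi_B$ with the $B$-cells pairwise disjoint, and truth of $iso$ would, via (\ref{eq-entail-0-iso}), identify the set of $A$-cells with the set of $B$-cells and hence force $h=\bigsepstar^{\ell'}_{j=1} s'(u_j)\mapsto s'(u'_j)$; together, by Definition~\ref{defn:satisfaction}, these give $(s,h)\models B$, contradicting the assumption. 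The converse direction is dual: from $\bar a$ with $\gamma_A(\bar a)=\top$, Lemma~\ref{l-gamma} builds a stack $s$ with $s(\bar x)=\bar a$ and heap $h_A=\bigsepstar^\ell_{i=1} s(t_i)\mapsto s(t'_i)$ so that $(s,h_A)\models A$; were $(s,h_A)\models B$, the semantic witnesses $\bar m$ for $\bar y$ would provide $s'=s[\bar y\mapsto\bar m]$ satisfying $\Pi_B$ together with $h_A=\bigsepstar^{\ell'}_{j=1} s'(u_j)\mapsto s'(u'_j)$, whence $\gamma_B(\bar a,\bar m)$ and $iso(\bar a,\bar m)$ both hold, contradicting the choice of $\bar a$; so $(s,h_A)$ is the desired counter-model.

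The one step I would treat carefully is the equivalence, under valuations making $\gamma_A$ and $\gamma_B$ true, between $iso(\bar x,\bar y)$ and equality of the two heaps as finite partial functions. Here I would use that $\gamma_A$ and $\gamma_B$ each force the allocated addresses of $A$ (respectively of $B$) to be pairwise distinct, so that $\{(s(t_i),s(t'_i))\}_i$ and $\{(s'(u_j),s'(u'_j))\}_j$ are genuine function graphs. The two conjuncts of (\ref{eq-entail-0-iso}) then express mutual set-inclusion of these graphs --- the first that every $A$-cell matches some $B$-cell in both address and content, the second the reverse --- and mutual inclusion of the graphs is exactly equality of the partial functions, which is what licenses passing between the spatial clause $F_1*F_2$ of Definition~\ref{defn:satisfaction} and the $B$-part of $\varepsilon_{A,B}$. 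Beyond this, I expect only routine bookkeeping to align the existential $\bar y$ of $\varepsilon_{A,B}$ with the existential witnesses in the satisfaction clause for $B$.
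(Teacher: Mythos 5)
Your proposal is correct and follows essentially the same route as the paper, which simply says the proof is ``similar to Lemma~\ref{l-gamma}'': you unfold $\neg\varepsilon_{A,B}$, transfer stacks to valuations and back via the heap/solution correspondence of Lemma~\ref{l-gamma}, and observe that under the distinctness constraints of $\gamma_A$ and $\gamma_B$ the formula $iso(\bar x,\bar y)$ expresses equality of the two cell collections as partial functions. Your explicit treatment of that last point is exactly the detail the paper leaves implicit, so no further comparison is needed.
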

\begin{proof}
 Similar to Lemma~\ref{l-gamma}.
\end{proof}  

\comment{
\begin{proof}
 By definition,
 given an\ \mbox{$(s,h)$},\ a model for~$A$, %
 we have\ \mbox{$\Pi_A(s(x_1),..,s(x_n))$}\ is true,
 and $h$ is the disjoint collection of the corresponding cells:
\begin{equation}%
  h =  \bigsepstar^\ell_{i=1} s(t_i)\mapsto s(t_i')
    \label{eq-heap-A}
\end{equation}%
 which implies that\
 \mbox{$\bigwedge_{1 \leq i<j \leq \ell}\ (s(t_i)\neq s(t_j))$}\ .

 Conversely, assume a mapping $s$ provides
 an evaluation \mbox{$(s(x_1),..,s(x_n))$} which makes $\gamma_A$
 true. Then\ \mbox{$\Pi_A(s(x_1),..,s(x_n))$}\ is true,
 and, in addition, we can take a heap $h_A$
 as the disjoint collection of the cells
 in accordance with~(\ref{eq-heap-A}),
 which provides: \mbox{$(s,h_A)\models A$}.
\end{proof}  
} 

\subsection{Upper and Lower Bounds}\label{s-small-entail}

 Here we establish the following   
 upper and lower bounds for the general quantified entailment problem.
 Namely,
\begin{itemize}
\item[(a)]
 For full pointer arithmetic, the entailment problem
 belongs to the class Presburger $\Pi^0_2$,
 by which we denote, with a quantifier-free $Q$,
 the class of formulas in the \emph{Presburger arithmetic}
 of the form
\begin{equation}%
 \forall x_1 \forall x_2..\forall x_n
 \exists y_1\exists y_2..\exists y_m\,
 Q(x_1,x_2,..,x_n,y_1, y_2,..,y_m).
                                         \label{eq-Q-1}
\end{equation}%

\item[(b)]
 For minimal pointer arithmetic, the entailment problem
 is proved to be at least $\Pi^P_2$-complete, where
 $\Pi^P_2$ is the second class
 in the polynomial time hierarchy~\cite{Stockmeyer:77}.
\end{itemize}

 The crucial difference between Presburger $\Pi^0_2$ and\/
 polynomial $\Pi^P_2$ is that for the latter
 {\em all variables should be polynomially bounded}.

\begin{proposition}\label{t-entail-P2-1-Pres}
 The entailment problem \mbox{$A \models B$}
 with quantified $A$ and\/ $B$
 is in Presburger $\Pi^0_2$.
\end{proposition}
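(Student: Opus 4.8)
The plan is to read the required $\Pi^0_2$ Presburger sentence directly off the encoding set up in Definition~\ref{d-entail-1}, and to certify its equivalence with the entailment via Lemma~\ref{l-epsilon}. By that lemma, a counter-model of $A\models B$ exists precisely when $\neg\varepsilon_{A,B}$ is satisfiable; equivalently, $A\models B$ holds iff $\varepsilon_{A,B}$ is \emph{valid} over $\nat$. It thus suffices to present $\varepsilon_{A,B}$ as a sentence of the shape~(\ref{eq-Q-1}).

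First I would dispose of the quantifiers carried by $A$ and $B$ themselves. Writing $A = \exists\bar z.\ \Pi_A : F_A$, the clause $(s,h)\models A \iff \exists\bar z.\,(\cdots)$ lets me pull the existential $\bar z$ across the entailment's implication: since $\bar z$ does not occur free in $B$, a premise of the form $\exists\bar z.\,\gamma_A$ under $\to$ becomes a \emph{universal} $\forall\bar z$ guarding the whole implication. Hence $A$ may be taken quantifier-free at the cost of adjoining $\bar z$ to the universally quantified block $\bar x$. The existential block $\exists\bar y$ of $B$, by contrast, stays in place and furnishes the inner existential quantifiers.

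Consequently $\varepsilon_{A,B}$, as defined in~(\ref{eq-entail-0}), has the prenex shape
\[
 \forall x_1\cdots\forall x_n\,\exists y_1\cdots\exists y_m\ Q(x_1,\dots,x_n,y_1,\dots,y_m),
\]
exactly as recorded in~(\ref{eq-Q}). It remains only to check that the matrix $Q$ is quantifier-free Presburger. This is immediate from the construction: $\gamma_A$, $\gamma_B$ and $iso$ are Boolean combinations of linear (in)equalities between the terms $t$, which in full pointer arithmetic expand to linear polynomials over the variables, so each is quantifier-free Presburger, and combining them under $\to$ and $\wedge$ preserves that. Hence $\varepsilon_{A,B}$ is a genuine Presburger $\Pi^0_2$ sentence, and deciding $A\models B$ reduces to deciding its truth.

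I expect the only delicate point to be the quantifier bookkeeping of the second paragraph: one must verify that the witnesses for $A$'s existentials become \emph{universally} quantified---because $A$ lies on the left of the entailment---whereas $B$'s witnesses remain existential, so that the prefix collapses cleanly to $\forall^{*}\exists^{*}$ with no interleaving of the two blocks. The passage from heap models to arithmetic models is already supplied by Lemma~\ref{l-epsilon}, and membership in Presburger $\Pi^0_2$ then follows by its definition in~(\ref{eq-Q-1}).
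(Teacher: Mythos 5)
Your proposal is correct and follows essentially the same route as the paper: invoke Lemma~\ref{l-epsilon} to reduce validity of $A\models B$ to truth of $\varepsilon_{A,B}$, and observe that this sentence already has the $\forall^{*}\exists^{*}$ prenex form of~(\ref{eq-Q-1}) with a quantifier-free Presburger matrix. Your additional bookkeeping for $A$'s own existentials (turning them into outer universals across the implication) is a sound elaboration of a point the paper leaves implicit by taking $A$ quantifier-free in Definition~\ref{d-entail-1}.
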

\begin{proof}
 According to Lemma~\ref{l-epsilon},
 \mbox{$A\models B$} is valid iff the following holds:
\begin{equation}%
 \forall \bar x\, (\gamma_A (\bar x)
 \to \exists\bar{y}
 (\gamma_B(\bar x, \bar y) \wedge iso(\bar x, \bar y)))
                             \label{eq-entail-P2-Pres}
\end{equation}%
 The latter belongs to Presburger $\Pi^0_2$.
\end{proof}  

\noindent
 The lower bound is the same:
\begin{proposition}\label{p-entail-P2-hard-Pres} 
 Since we have allowed {\em arbitrary Boolean combinations}
 of the elementary formulas\ \mbox{$(t_1=t_2)$},\
 \mbox{$(t_1\leq t_2)$}, and\  \mbox{$(t_1<t_2)$},
 we can simulate the class Presburger $\Pi^0_2$,
 providing Presburger $\Pi^0_2$ hardness, even
 within the pure part of our language.
\end{proposition}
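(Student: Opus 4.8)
The plan is to give a polynomial-time reduction from the truth problem for arbitrary Presburger $\Pi^0_2$ sentences into the entailment problem, thereby matching the upper bound of Proposition~\ref{t-entail-P2-1-Pres}. The guiding observation is that the encoding formula $\varepsilon_{A,B}$ of~(\ref{eq-entail-0}) already has exactly the shape~(\ref{eq-Q-1}) of a $\Pi^0_2$ Presburger sentence; so, to prove hardness, it suffices to show the converse, namely that \emph{every} $\Pi^0_2$ Presburger sentence can be realised as $\varepsilon_{A,B}$ for a suitable pair of symbolic heaps $A,B$ — and, crucially, that this can be arranged without any genuine use of the spatial part, so that the hardness already lives in the pure fragment.

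Concretely, I would fix an arbitrary sentence $\Phi = \forall\bar x\,\exists\bar y\, Q(\bar x,\bar y)$, where $Q$ is a Boolean combination of atoms $t_1 = t_2$, $t_1 \leq t_2$, $t_1 < t_2$, and take both heaps to carry the empty spatial part: $A \defeq (x_1 = x_1)\colon\emp$ and $B \defeq \exists\bar y\,\hat Q(\bar x,\bar y)\colon\emp$, where $\hat Q$ is the normalisation of $Q$ discussed below. With $\ell = \ell' = 0$ every disjointness conjunct inside $\gamma_A$ and $\gamma_B$ (in the sense of Definition~\ref{defn:gamma}) is a vacuous conjunction, so $\gamma_A \equiv \top$ and $\gamma_B \equiv \hat Q$; likewise the isomorphism formula $iso(\bar x,\bar y)$ of~(\ref{eq-entail-0-iso}) is an empty conjunction over the (non-existent) cells and hence $\top$. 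Consequently $\varepsilon_{A,B}$ collapses to $\forall\bar x\,\exists\bar y\,\hat Q(\bar x,\bar y)$, and by Lemma~\ref{l-epsilon} we conclude $A \models B$ iff $\Phi$ holds. As the whole construction is plainly computable in polynomial time, this yields the desired reduction.

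The one step that needs care is producing $\hat Q$ within the addition-only term syntax $t \Coloneqq x \mid t+k \mid t+t$, which offers no subtraction and only constants $k \in \nat$. Here I would normalise each linear atom, written as $\sum_i a_i x_i + \sum_j b_j y_j \sim c$, by transposing every summand carrying a negative coefficient (and, if necessary, a negative constant) to the opposite side of the relation, obtaining an equivalent atom $t_L \sim t_R$ in which both sides are $\nat$-linear combinations of the variables together with a nonnegative constant. Such terms are exactly what repeated use of $t+t$ and $+k$ builds, so each $t_L,t_R$ is expressible, and taking $\hat Q$ to be the same Boolean combination over the normalised atoms keeps it logically equivalent to $Q$ over $\nat$. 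Since full pointer arithmetic permits arbitrary Boolean combinations, $\hat Q$ is a legitimate pure formula $\Pi_B$. (If the source class is stated over $\mathbb{Z}$ rather than over the natural-number semantics used here, this normalisation step is also where one would encode signs; over $\nat$ it is immediate.)

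I expect this normalisation — reconciling signed Presburger atoms with the addition-only term language — to be the only real obstacle; everything else is a direct calculation showing that the empty-heap instances trivialise $iso$ together with the disjointness constraints inside $\gamma_A$ and $\gamma_B$. Because the construction uses no spatial structure beyond $\emp$, it establishes Presburger $\Pi^0_2$-hardness of entailment already within the pure part of the language, as claimed, and combined with Proposition~\ref{t-entail-P2-1-Pres} pins the full-arithmetic quantified entailment problem at the Presburger $\Pi^0_2$ level.
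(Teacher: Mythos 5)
Your construction is essentially the paper's intended argument: the paper states this proposition without an explicit proof, relying precisely on the observation that, with empty (or trivial) spatial parts, the encoding $\varepsilon_{A,B}$ degenerates into an arbitrary $\forall\bar x\,\exists\bar y\,Q(\bar x,\bar y)$ Presburger sentence, so the pure fragment alone already simulates Presburger $\Pi^0_2$. Your extra care in normalising signed linear atoms into the addition-only, $\nat$-constant term syntax $t \Coloneqq x \mid t+k \mid t+t$ is a correct detail that the paper leaves implicit, and the rest of your calculation (that $\gamma_A$, $\gamma_B$ and $iso$ trivialise when $\ell=\ell'=0$) is sound.
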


\comment{
 For minimal pointer arithmetic, at least we can prove that
 its entailment problem is in a smaller class
 of the form~(\ref{eq-Q}) in which all $y_j$ are Boolean variables.
} 

\begin{remark}\label{r-upper-P2}
 The crucial difference between Presburger $\Pi^0_2$ and\/
 polynomial $\Pi^P_2$ is that for the latter
 {\em all variables should be polynomially bounded}.
\footnote{
 According to Theorem~\ref{t-P2-upper},
 given $A$ and\/ $B$, symbolic heaps
 in minimal pointer arithmetic,
\mbox{$A\models B$} is valid if and only if
 within the corresponding form~(\ref{eq-Q})
 representing~(\ref{eq-entail-P2-Pres}),
 all $x_i$ are bounded by \ \mbox{$(n+1)\cdot M$}\
 and all $y_j$ by\ \mbox{$(n+m+2)\cdot M$},\
 where $M$ is defined as:\ \mbox{$M=\sum_i (|k_i|+1) $},\
 with $k_i$ ranging over all occurrences of these `offset' numbers
 occurred in\/ $A$ and\/~$B$.
 Here $Q$ is a Boolean combination
 of the elementary formulas\ \mbox{$(x'=x+k_0)$},\
 \mbox{$(x'\leq x+k_0)$},\ and\ \mbox{$(x'<x+k_0)$},
 where the `offset' $k_0$ is a fixed integer.
} 
\end{remark}  

\subsection{Quantified minimal arithmetic: A lower bound}

 To prove \mbox{$\Pi^P_2$-}hardness in the quantified case
 for the minimal pointer arithmetic,
 we use the following constructions.

\begin{ourproblem}{$2$-round $3$-colourability problem}
 Let \mbox{$G=(V,E)$} be an undirected graph with $n$~vertices
 $v_1, \dots, v_k, v_{k+1}, \dots v_n$,
 and let $v_1, v_2, \dots, v_k$ be its {\em leaves}.
 The problem is to decide
 if every \mbox{$3$-}colouring of the leaves can be extended to a
\mbox{$3$-}colouring of the graph,
 such that no two adjacent vertices share the same colour.
\end{ourproblem}

\begin{definition}  
\label{defn:2-3colour_to_entail}  
 Let \mbox{$G=(V,E)$} be an instance graph with $n$~vertices
 and $k$~leaves.
 In addition to the variables~$c_{i}$
 in Definition~\ref{defn:3colour_to_sat},
 to each edge \mbox{$(v_i,v_j)$} we associate $\widetilde{c_{ij}}$,
 representing the colour ``complementary'' to $c_{i}$ and\/~$c_{j}$.

 To encode the fact that no two adjacent vertices $v_i$ and $v_j$
 share the same colour,
 we intend to use $c_{i}$, $c_{j}$, and $\widetilde{c_{ij}}$
 as the addresses, relative to the base-offset $e_{ij}$,
 for three consecutive cells within a memory chunk of length~$3$,
 which forces the corresponding
  colours, related to $c_{i}$, $c_{j}$, and $\widetilde{c_{ij}}$,
  to form a {\em permutation} of \mbox{$(1,2,3)$}.
 In order to provide a sufficient memory to accommodate
 the disjoint cells in question, we take the numbers $e_{ij}$
 as in Definition~\ref{defn:3colour_to_sat} to satisfy
 Proposition~\ref{p-e-ij}.  

\noindent Formally, we define $A''_G$ to be the following
 quantifier-free symbolic heap:
\begin{equation}%
 (b=c_0+3)\ \wedge\
 \bigwedge_{i=1}^{k}(c_0+1\leq{c_{i}}\leq b)
\colon 
 \bigsepstar_{(v_i,v_j)\in E,\ \ell=1,2,3}\,
      c_0 + e_{ij}+\ell\mapsto \nil{}
                        \label{eq-entail-P2-hard-A}
\end{equation}%
 and $B''_G$ to be the following quantified symbolic heap:
\begin{equation}%
\begin{array}{@{}l}
\exists\bar{z}.\ \displaystyle 
 \bigwedge_{i=1}^{n}\,(c_0+1\leq{c_{i}}\leq b)
\wedge
 \bigwedge_{(v_i,v_j)\in E}\,(c_0+1\leq \widetilde{c_{ij}}\leq b)
\colon 
\\[4ex]\hspace*{5ex} \displaystyle 
 \bigsepstar_{(v_i,v_j)\in E}\,
 c_{i} + e_{ij}\mapsto \nil{}\ *\ c_{j} + e_{ij} \mapsto \nil{}\ *\
 \widetilde{c_{ij}} + e_{ij} \mapsto \nil{}
\end{array}%
                        \label{eq-entail-P2-hard-B}
\end{equation}%
 where the existentially quantified variables $\vec{z}$
 are all variables occurring in\/~$B''_G$ that are not mentioned
 explicitly in\/~$A''_G$. 

 Notice that both $A''_G$ and $B''_G$ are satisfiable and in
 minimal pointer arithmetic.

 $B''_G$~is satisfiable because  $B''_G$~does not impose any bounds
 on~$b$, so that we can use, for instance, $n$ distinct colours,
 which suffices to produce a perfect \mbox{$n$-}colouring
 for any\/~$G$ with $n$ vertices.

 Proposition~\ref{p-e-ij} takes care of making the corresponding cells
 disjoint.
\end{definition}

\comment{
\begin{lemma}\label{lem:3color_to_sat}
 Let $G$ be an instance of the \mbox{$3$-}colouring problem.
 Then $A_G$ from Definition~\ref{defn:3colour_to_sat}
 is satisfiable iff there is a perfect \mbox{$3$-}colouring of\/~$G$.
\end{lemma}
} 

\begin{lemma}\label{l-colour-entail-P2}
 Let\/ $G$ be a
 \mbox{$2$-}round \mbox{$3$-}colouring instance.
 The entailment problem
\ \mbox{$A''_G\models B''_G$}\ is valid iff there is a winning
 strategy for the perfect $3$-colouring of $G$,
 where $A''_G$ and $B''_G$ are
 the symbolic heaps given by
 Definition.~\ref{defn:2-3colour_to_entail}.
\end{lemma}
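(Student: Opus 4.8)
The plan is to prove both directions of the biconditional by first pinning down exactly what the models of $A''_G$ are, then characterising precisely which of them satisfy $B''_G$, and finally matching the universal/existential structure of the entailment to the ``winning strategy'' condition.

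First I would observe that a model $(s,h)$ of $A''_G$ is, up to irrelevant data, just a choice of leaf colouring. The pure part of $A''_G$ forces $s(b)=s(c_0)+3$ and $s(c_i)\in\{s(c_0)+1,s(c_0)+2,s(c_0)+3\}$ for each leaf $v_i$, so reading off $b_i=s(c_i)-s(c_0)\in\{1,2,3\}$ gives a $3$-colouring of the leaves; the value $s(c_0)$ and the values of the remaining (internal and complementary) variables play no role, as they do not occur in $A''_G$. The spatial part moreover determines $h$ completely: it is the disjoint union, over every edge $(v_i,v_j)\in E$ and every $\ell\in\{1,2,3\}$, of the cells $s(c_0)+e_{ij}+\ell\mapsto\nilv$, and these are pairwise distinct by Proposition~\ref{p-e-ij} (distinct edges occupy chunks at least $n$ apart, and within a single chunk the offsets $1,2,3$ differ). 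Conversely every leaf colouring arises from such a model.

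Next I would prove the key local fact: for a model $(s,h)$ of $A''_G$, we have $(s,h)\models B''_G$ if and only if its leaf colouring extends to a proper $3$-colouring of $G$. Satisfying $B''_G$ amounts to choosing values for $\bar z$ — colours $c_{k+1},\dots,c_n$ for the internal vertices and a complementary colour $\widetilde{c_{ij}}$ for each edge, all constrained by $B''_G$'s pure part to lie in $\{s(c_0)+1,\dots,s(b)\}$; crucially $b$ is fixed by the $A''_G$-model to $s(c_0)+3$, so this range is exactly the three colours $\{1,2,3\}$ — so that the spatial part of $B''_G$ matches $h$ exactly. The three $B$-cells of edge $(v_i,v_j)$ sit at offsets $s(c_i)-s(c_0)$, $s(c_j)-s(c_0)$, $s(\widetilde{c_{ij}})-s(c_0)$ inside the very chunk $\{s(c_0)+e_{ij}+1,\,s(c_0)+e_{ij}+2,\,s(c_0)+e_{ij}+3\}$ that $h$ allocates for that edge, and Proposition~\ref{p-e-ij} keeps the chunks of distinct edges disjoint (any two addresses from different chunks differ by more than $2<n$). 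Hence the separating conjunction can be satisfied only if, for each edge, these three offsets are distinct and therefore exhaust the chunk, i.e.\ $\{c_i,c_j,\widetilde{c_{ij}}\}$ is a permutation of $(1,2,3)$. Such a $\widetilde{c_{ij}}$ exists precisely when $c_i\neq c_j$, so $(s,h)\models B''_G$ holds iff the colouring gives distinct colours to the endpoints of every edge, i.e.\ is a proper $3$-colouring extending the leaves. This indirect ``permutation of three colours'' device is exactly what expresses the disequality $c_i\neq c_j$, which minimal pointer arithmetic cannot state directly.

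Combining the two observations closes both directions. For the ``if'' direction, if every leaf colouring extends to a proper $3$-colouring, then given any model $(s,h)$ of $A''_G$ we use its winning extension to define $\bar z$ (internal colours, and each $\widetilde{c_{ij}}$ as the third colour), obtaining $(s,h)\models B''_G$; as $(s,h)$ was arbitrary, $A''_G\models B''_G$. For the ``only if'' direction, given $A''_G\models B''_G$ and an arbitrary leaf colouring, the corresponding model of $A''_G$ satisfies $B''_G$, and the witnessing $\bar z$ supplies the required proper extension, so a winning strategy exists. I expect the main obstacle to be the heap-matching step of the second paragraph — rigorously deducing from the semantics of $*$ together with Proposition~\ref{p-e-ij} that satisfying the spatial part of $B''_G$ forces the per-edge permutation — rather than the surrounding combinatorial bookkeeping.
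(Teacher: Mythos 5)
Your proposal is correct and follows essentially the same route as the paper's proof: both directions hinge on reading off colours as $s(c_i)-s(c_0)$, using Proposition~\ref{p-e-ij} to keep the per-edge chunks disjoint, and observing that the separating conjunction in $B''_G$ forces $(c_i,c_j,\widetilde{c_{ij}})$ to be a permutation of the three chunk offsets, which encodes $c_i\neq c_j$. The only difference is organisational --- you factor out a single lemma characterising which $A''_G$-models satisfy $B''_G$, whereas the paper argues the two directions of the biconditional directly by constructing the stack extension $s(\widetilde{c_{ij}})=s(c_0)+6-b_i-b_j$ and, conversely, by modifying the stack on the leaf variables --- but the mathematical content is the same.
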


\begin{proof} 
 Suppose that there is a winning strategy such
 that every \mbox{$3$-}colouring of the leaves can be extended to a
 perfect \mbox{$3$-}colouring of the whole~$G$.
 We will prove that \mbox{$A''_G\models B''_G$}.

\noindent
 Let \mbox{$s,h$} be a stack-heap pair satisfying
 \mbox{$s,h \models A''_G$}.

 The spatial part of $A''_G$ yields a decomposition of~$h$ as
 the disjoint collection of the cells
 (we recall that \mbox{$s(e_{ij}) = e_{ij}$}
 and \mbox{$s(\ell) = \ell$}):
\begin{equation}%
 h =  \bigsepstar_{(v_i,v_j)\in E,\ \ell=1,2,3}\,
 s(c_0)+ e_{ij}+\ell\mapsto \nil{}
              \label{eq-h-A}
\end{equation}%
 and
\ \ \mbox{$\bigwedge_{i=1}^{k}(s(c_0)+1\leq{s(c_{i})}\leq(s(c_0)+3)$}.

 Take the \mbox{$3$-}colouring of the leaves obtained by assigning
 the colours \mbox{$b_{i}$} to the leaves $v_1$, $v_2$,\dots, $v_k$
 resp..
 where \mbox{$b_{i} = s(c_i)- s(c_0)$}. According to the winning strategy, we can assign colours, denote
 them by $b_{i}$, \mbox{$i>k$}, to the rest of vertices\
 $v_{k+1}$, \dots, $v_n$, resp.,
 obtaining a \mbox{$3$-}colouring of the whole~$G$ such that
 no adjacent vertices share the same colour. In addition, we mark edges \mbox{$(v_i,v_j)$}
 by $\widetilde{b}_{ij}$ complementary to $b_{i}$ and~$b_{j}$.

 We extend the stack $s$ for quantified variables in~$B''_G$
 so that for all \mbox{$i\leq k$},
$$ s(c_i) = s(c_0) + b_{i},$$%
 and, for each \mbox{$(v_i, v_j)\in E$}, we have 
 \mbox{$ s(\widetilde{c_{ij}}) =  s(c_0) + 6 - b_{i} - b_{j}$}. The fact that no adjacent vertices $v_i$ and\/ $v_j$
 share the same colour means that\/\ \
 $$\mbox{$(s(c_{i}),\, s(c_{j}),\, s(\widetilde{c_{ij}}))$}$$%
 {\em is a permutation of\/}
 \ $$\mbox{$(s(c_0)+1,\, s(c_0)+2,\, s(c_0)+3)$},$$%
  and, as a result, \mbox{$(s,h)$} is also a model for $B''_G$:
\begin{equation}%
 h = \bigsepstar_{(v_i,v_j)\in E}\,
 s(c_{i}) + e_{ij}\mapsto \nil{}\ *\ s(c_{j}) + e_{ij} \mapsto \nil{}\ *\
 s(\widetilde{c_{ij}}) + e_{ij} \mapsto \nil{}
              \label{eq-h-B}
\end{equation}

As for the opposite direction, let \mbox{$A''_G{}\models B''_G{}$}. Since \mbox{$A''_G{}$} is satisfiable, there is a model \mbox{$(s,h)$} for $A''_G$ so that, in particular, $h$~satisfies (\ref{eq-h-A}).

We will construct the required winning strategy in the following way. Assume a \mbox{$3$-}colouring of the leaves be given by assigning
 colours, say $b_{i}$, to the leaves $v_1$, $v_2$,\dots, $v_k$
 respectively. We modify our original~$s$ to a stack~$s'$
 by defining, for each \mbox{$1\leq i\leq k$},
 $$ s'(c_i) = s(c_0) + b_{i}.$$%
 which does not change the heap~$h$, but provides
$$\bigwedge_{i=1}^{k}(s(c_0)+1\leq{s'(c_{i})}\leq(s(c_0)+3).$$%

 It is clear that the modified\ \mbox{$(s',h)$}\
 is still a model for $A''_G$, and, hence,
 a model for $B''_G$. Then for some
 stack\/~$s_B$, which is extension of\/~$s'$
 to the existentially quantified variables in\/~$B$,
 we get \ \mbox{$(s_B,h) \models B''_G$}.

 For each \mbox{$1\leq i\leq k$},
 \ \mbox{$s_B(c_i) = s'(c_i) = s_B(c_0) + b_{i}$},\
 which means that, for \mbox{$1\leq i\leq k$},
 these \mbox{$s_B(c_i)$} represent correctly
 the original \mbox{$3$-}colouring of the leaves.

 By assigning the colours\ \ \mbox{$b_i=s_B(c_i)-s_B(c_0)$}\ \ 
 to the rest of vertices $v_{k+1}$, $v_{k+2}$,
 \dots, $v_n$ resp.
 we obtain a \mbox{$3$-}colouring of the whole~$G$.

 The spatial part of the form~(\ref{eq-h-B})
 provides that\ \mbox{$s_B(c_i)\neq s_B(c_j)$},\
 which results in that
 no adjacent vertices $v_i$ and\/ $v_j$ share
 the same colours $b_i$ and\/ $b_j$,
 providing a perfect \mbox{$3$-}colouring of\/~$G$.
\end{proof}  


\begin{theorem}\label{t-entail-P2-0} 
 The entailment problem \mbox{$A \models B$} is
 \mbox{$\Pi^P_2$-}hard,
 even for quantifier-free \red{satisfiable} formulas $A$ and
 quantified \red{satisfiable} formulas~$B$,
 both in minimal pointer arithmetic{}.
\end{theorem}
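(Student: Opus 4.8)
The plan is to obtain $\Pi^P_2$-hardness purely by transporting it along the reduction already supplied by Lemma~\ref{l-colour-entail-P2}, so that the whole argument reduces to establishing hardness of the combinatorial problem on the left. That lemma shows that the map $G \mapsto (A''_G, B''_G)$ sends a $2$-round $3$-colourability instance $G$ to a pair of symbolic heaps for which $A''_G \models B''_G$ holds exactly when $G$ admits a winning colouring strategy. First I would observe that this map is plainly computable in polynomial time: the offsets $e_{ij}$ are given by the closed formula in Definition~\ref{defn:2-3colour_to_entail} and the spatial $\bigsepstar$ has length linear in $|E|$, so the whole output is of size polynomial in $|G|$. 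Moreover, by Definition~\ref{defn:2-3colour_to_entail}, $A''_G$ is quantifier-free, $B''_G$ is quantified, both are \emph{satisfiable}, and both lie in minimal pointer arithmetic. Thus every side condition demanded in the statement (``quantifier-free satisfiable $A$, quantified satisfiable $B$, minimal pointer arithmetic'') is met automatically, and it remains only to show that $2$-round $3$-colourability is itself $\Pi^P_2$-hard.

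For that I would give a quantified version of the classical reduction witnessing $\NP$-hardness of ordinary $3$-colourability~\cite{Garey-Johnson:79}, reducing from the $\Pi^P_2$-complete problem $\Pi_2$-QBF, i.e.\ validity of $\forall \bar x\,\exists \bar y\,\phi(\bar x,\bar y)$ with $\phi$ in $3$-CNF~\cite{Stockmeyer:77}. The skeleton is the usual one: a palette (``true''/``false''/``base''), one literal gadget $z,\bar z$ per variable forcing exactly one of $z,\bar z$ to the ``true'' colour, and one OR-gadget per clause that is $3$-colourable precisely when some literal of the clause is coloured ``true''. The existential variables $\bar y$ become internal vertices, coloured in the extension phase, while the universal variables $\bar x$ become the \emph{leaves}, coloured first. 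The intended correspondence is that an extension of a given leaf colouring exists iff the Boolean assignment to $\bar x$ read off from the leaves makes $\exists \bar y\,\phi$ true; composing this reduction with Lemma~\ref{l-colour-entail-P2} then yields the theorem.

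The main obstacle is the arity mismatch between the three colours a leaf may receive and the two truth values of a universal variable, compounded by the fact that proper colourings are defined only up to permuting the three colours. The leaves must form an \emph{independent} set, for otherwise the universal player could choose a leaf colouring clashing on some edge and make the extension fail vacuously; consequently I cannot pin a true/false/base reference by any edge \emph{inside} the leaf set. The key to a clean argument is that, in the $2$-round problem, the leaves receive \emph{concrete} colours $\{1,2,3\}$ chosen by the universal player, so I do have an absolute reference available there, just not one enforced by adjacency.

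To exploit this I would designate three ``reference'' leaves together with the $p$ variable leaves, and engineer the internal gadgetry so that (i) whenever the reference leaves are coloured degenerately (not a rainbow) the instance extends trivially, so such choices contribute nothing to the universal conjunction, and (ii) whenever the reference is a rainbow, each variable leaf's concrete colour is \emph{folded} onto a Boolean by a surjective normalisation subgraph, so that ranging over all $3^{p}$ leaf colourings is equivalent to ranging over all $2^{p}$ Boolean assignments to $\bar x$. Surjectivity of the folding gives one direction (every Boolean assignment is realised, so a false QBF forces a non-extendable leaf colouring), and the triviality clause (i) gives the other (no spurious leaf colouring can falsify an otherwise-valid instance); verifying these two directions while keeping the leaves independent is the delicate combinatorial step. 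Once it is in place, $G_\Psi$ is a yes-instance of $2$-round $3$-colourability iff $\Psi$ is valid, and the composition with Lemma~\ref{l-colour-entail-P2} completes the proof.
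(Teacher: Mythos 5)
Your first paragraph is exactly the paper's proof of this theorem: the paper's argument consists of invoking Lemma~\ref{l-colour-entail-P2} together with the observations already recorded in Definition~\ref{defn:2-3colour_to_entail} (that $A''_G$ is quantifier-free, $B''_G$ quantified, both satisfiable and in minimal pointer arithmetic, and that the map $G \mapsto (A''_G, B''_G)$ is polynomial-time computable since the offsets $e_{ij}=i\cdot n^2+j\cdot n$ have polynomially many bits). Where you diverge is that the paper simply takes the $\Pi^P_2$-hardness of the $2$-round $3$-colourability problem as given, citing~\cite{Stockmeyer:77} only for the definition of the class, whereas you attempt to prove it by a quantified version of the classical QBF-to-$3$-colouring reduction. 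That attempt is a sensible plan and correctly isolates the genuine difficulties (the three-colours-versus-two-truth-values mismatch, the permutation symmetry of proper colourings, and the need for the designated leaves to be independent so the universal player cannot falsify the instance vacuously), but the decisive step --- the ``folding'' gadget satisfying your conditions (i) and (ii) --- is only specified by its desired properties, not constructed or verified, and you say so yourself. So, measured against the paper, your proposal proves everything the paper proves and additionally gestures at a component the paper leaves as an external fact; if a self-contained argument is wanted, that gadget must actually be built (such constructions do exist in the literature on colouring-extension and quantified graph-colouring problems), and until it is, your proof of the hardness of the source problem is a sketch rather than a proof. The reduction from that problem to entailment, which is the part this theorem is really about, is handled correctly and identically to the paper.
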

\begin{proof}
  Via the\ \mbox{$2$-}round \mbox{$3$-}colourability problem,
  with Lemma~\ref{l-colour-entail-P2}.
\end{proof}  

\comment{
\begin{corollary}\label{c-entail-P2}
 The entailment problem \mbox{$A \models B$} is
 \mbox{$\Pi^P_2$-}complete,
 even for quantifier-free \red{satisfiable} formulas $A$ and
 quantified \red{satisfiable} formulas~$B$,
 both in minimal pointer arithmetic{}.
\end{corollary}
} 

\comment{
\maxcomment{Perhaps, not now

 For the minimal pointer arithmetic, however,
 we can improve the upper bound significantly,
 up to $\Pi^P_2$,
 by constructing an efficient procedure for the entailment problem,
 in which the corresponding polytime sub-procedures are running
 as the shortest paths procedures with negative weights allowed
 (e.g., Bellman-Ford algorithm),
 with providing polynomials of low degrees
}}


\subsection{Quantifier-free Entailment}

\comment{

 Then we rewrite $\gamma_A$ as
\begin{equation}%
 \gamma_A(\widetilde{x}_1,..,\widetilde{x}_n)
 \equiv f_A(Z_1, Z_2,\dots, Z_m)
                           \label{eq-gamma-f}
\end{equation}%
 where\ \mbox{$f_A(z_1, z_2,\dots, z_m)$}\
 is a Boolean function,
 and for each~$i$,\ the expression $Z_i$, we substitute which for
 the Boolean variable~$z_i$ in~(\ref{eq-gamma-f}),
 is of the form:\ \mbox{$(x'_i\leq x_i+k_i)$},
 here $x_i$ and $x'_i$ are pointer variables and $k_i$ is an integer.
} 

\begin{theorem}\label{t-entail-coNP-1}  
 The entailment problem \mbox{$A \models B$}
 with quantifier-free~$B$ is in $\CoNP$.
\end{theorem}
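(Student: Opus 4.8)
The plan is to show that the complement of the entailment problem---finding a counter-model---is in $\NP$, so that the entailment problem itself lies in $\CoNP$. By Lemma~\ref{l-epsilon}, a counter-model for $A \models B$ corresponds exactly to a model of $\neg\varepsilon_{A,B}$, and since $B$ is quantifier-free, the formula $\varepsilon_{A,B}$ from~(\ref{eq-entail-0}) simplifies: with no existential $\bar y$ to worry about, it becomes $\forall \bar x\,(\gamma_A(\bar x) \to (\gamma_B(\bar x) \wedge iso(\bar x)))$. Its negation is therefore $\exists \bar x\,(\gamma_A(\bar x) \wedge \neg(\gamma_B(\bar x) \wedge iso(\bar x)))$, a purely existential (i.e.\ $\Sigma^0_1$) quantifier-free Presburger formula.

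The key step is then to invoke the fact, already used in Corollary~\ref{c-SAT-NP-full}, that satisfiability of quantifier-free Presburger arithmetic is in $\NP$~\cite{Scarpellini:84}. First I would verify that the translation of $\neg\varepsilon_{A,B}$ into a quantifier-free Presburger formula is of polynomial size: $\gamma_A$ and $\gamma_B$ are Boolean combinations of elementary difference constraints of polynomial size by Definitions~\ref{defn:gamma} and~\ref{d-entail-1}, and $iso(\bar x)$ as defined in~(\ref{eq-entail-0-iso}) is a conjunction of $\ell + \ell'$ disjunctions, each of polynomially many conjoined equalities, hence also polynomial. Negating and conjoining these keeps the formula polynomial-sized and quantifier-free. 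An $\NP$ machine can therefore guess a satisfying assignment (of polynomially many bits, appealing to the small-model bound for $\Sigma^0_1$ Presburger) and verify it in polynomial time, establishing that the complement problem is in $\NP$ and hence $A \models B$ is in $\CoNP$.

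I expect the main obstacle to be handling the existential quantifiers of $B$ correctly: although $B$ is quantifier-free in the statement, one must be careful that the isomorphism formula $iso$ genuinely captures the requirement that the two disjoint cell collections coincide as heaps, and that dropping the $\exists \bar y$ is legitimate precisely because $B$ carries no quantifiers. A secondary subtlety is ensuring the polynomial-size guess: the counter-model values $\bar x$ must be witnessed by numbers whose bit-length is polynomial in the input, which follows from the standard small-model property for existential quantifier-free Presburger arithmetic rather than from the separation-logic small-model property of Theorem~\ref{t-small}. Once these points are checked, the $\CoNP$ upper bound is immediate.
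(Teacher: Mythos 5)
Your proposal is correct and follows essentially the same route as the paper: negate the entailment to obtain the existential quantifier-free Presburger formula $\exists \bar x\,(\gamma_A \wedge \neg(\gamma_B \wedge iso))$ via Lemma~\ref{l-epsilon}, and then appeal to the $\NP$ membership of quantifier-free Presburger satisfiability as in Corollary~\ref{c-SAT-NP-full}. The extra care you take over the polynomial size of the translation and the bit-length of the witness is exactly the detail the paper leaves implicit.
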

\begin{proof}   
 \mbox{$A\models B$} is not valid iff the following holds:
\begin{equation}%
 \exists \bar x\, (\gamma_A \land \neg
 (\gamma_B
 \wedge iso(\bar x, \bar y)))
                             \label{eq-entail-coNP}
\end{equation}%
 At this point, we can follow our proof for Theorem~\ref{t-SAT-NP-1}
 to show that satisfiability of~(\ref{eq-entail-coNP})
 belongs to~$\NP$.
\end{proof}  

\begin{remark}\label{r-small-model-entail}  
 (Cf.~Remark~\ref{r-small-model})  
 No small model property is valid
 whenever we allow \mbox{$x\leq x'+k$},\
 with $k$ being a variable. 

\noindent
 Let $A_n$ and $B_n$ be symbolic heaps of the form
 (here \mbox{$k_0=1$}), both satisfiable:
 $$ A_n \defeq\
  \bigwedge_{i=0}^{n-1} (x_{i+1} = c_0+k_{i+1} = x_{i}+k_{i}) \colon
  \bigsepstar_{i=1}^{n} x_{i}\mapsto \nil{}
 $$%
 and
 $$ B_n \defeq\ (x_n \leq x_0)
 \colon
  \bigsepstar_{i=1}^{n} x_{i}\mapsto \nil{}
 $$%
 \mbox{$A_n\models B_n$} is not valid, but
 for any polynomial~$p$, there is a number~$n_0$ such that
 for all \mbox{$n\geq n_0$},
 there is no counter-model of size \mbox{$\leq p(n)$}.
\qed
\end{remark}

\comment{
\maxcomment{
  a POSITIVE aspect, small model, low degree polynomials

    Modify for entailment
}
} 

\begin{theorem}[``the small model property'']
\label{c-SAT-small-model-entail}\label{t-small-entail}
 Given $A$ and\/ $B$, quantifier-free symbolic heaps
 in minimal pointer arithmetic,
 suppose that \mbox{$A\models B$} is not valid.
 Then we can find a counter-model \mbox{$(s,h)$}
 such that\ \mbox{$(s,h)\models A$}\
 but\ \mbox{$(s,h)\not\models B$},\
 in which all values are bounded by\/~$M$,
 which suffices to take as:\  \mbox{$M=\sum_i (|k_i|+1) $},\
 where $k_i$ ranges over all occurrences of numbers
 occurred in\/ $A$ and\/~$B$.
\end{theorem}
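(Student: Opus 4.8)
The plan is to reduce the existence of a bounded counter-model to the \emph{satisfiability} of a single quantifier-free Presburger formula and then re-run the constraint-graph construction of Theorem~\ref{t-small} on that formula. By Lemma~\ref{l-epsilon}, and since $B$ is quantifier-free (so that $iso(\bar x,\bar y)$ contains no $\bar y$), a counter-model for $A\models B$ is exactly a stack $s$ satisfying
\begin{equation*}
  \phi(\bar x)\ \defeq\ \gamma_A(\bar x)\ \wedge\ \neg\bigl(\gamma_B(\bar x)\wedge iso(\bar x)\bigr).
\end{equation*}
Indeed, if $s\models\gamma_A$ then Lemma~\ref{l-gamma} supplies the heap $h$ built from the spatial part of $A$ with $(s,h)\models A$, while $s\models\neg(\gamma_B\wedge iso)$ forces $(s,h)\not\models B$. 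So it suffices to show that whenever $\phi$ is satisfiable it has a solution with all values bounded by $M$; the counter-model $(s,h)$ is then recovered as above.

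First I would observe that $\phi$ is a Boolean combination of elementary difference constraints $x'\le x+k$ with integer offsets. This holds for $\gamma_A$ and $\gamma_B$ by the remark following Definition~\ref{defn:gamma}, and for $iso$ because each equality $t_i=u_j$ (with $t_i,u_j$ of the form variable-plus-constant in minimal pointer arithmetic) splits into the two constraints $t_i\le u_j$ and $u_j\le t_i$; negation and the Boolean connectives preserve this shape. Hence $\phi$ can be put in the form $f_\phi(Z_1,\dots,Z_m)$ of Definition~\ref{defn:gamma-1}, where each $Z_i$ is ``$x_i'\le x_i+k_i$'' for an integer offset $k_i$.

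Next I would apply the machinery of Theorem~\ref{t-small} essentially verbatim, noting that its proof uses nothing about $\gamma_A$ beyond its being such a Boolean combination. Since $\phi$ is satisfiable, Proposition~\ref{p-gamma} applied to $f_\phi$ yields a Boolean vector $\bar\zeta$ with $f_\phi(\bar\zeta)=\top$ such that the system $\gamma_{\phi,\bar\zeta}$ has an integer solution; equivalently, the constraint graph $\widetilde G_{\phi,\bar\zeta}$ of Definition~\ref{d-graph} has no negative cycle (Proposition~\ref{p-Cormen}). Introducing the maximum node $\widehat{x}_0$ with $x_i\le x_0$ and setting $s(x_0)=M$, $s(x_i)=M+d_i$ where $d_i$ is the shortest-path weight from $\widehat{x}_0$ to $\widehat{x}_i$, gives a solution of $\phi$ in which every $s(x_i)$ lies in $[0,M]$, exactly as in Theorem~\ref{t-small}.

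The main obstacle is the offset bookkeeping needed to certify the \emph{same} bound $M=\sum_i(|k_i|+1)$ with $k_i$ ranging over the numbers occurring in $A$ and $B$. The offsets in $\phi$ are not literally those numbers: the cell-distinctness conjuncts in $\gamma_A,\gamma_B$ contribute $\pm 1$, and the $iso$-equalities contribute differences $k'-k$ of the constant parts of matched terms. I would check that each edge weight of $\widetilde G_{\phi,\bar\zeta}$ is bounded in absolute value by a sum of number-occurrences plus one, so that the total weight of any simple path, and hence each $|d_i|$, is at most $M$; the ``$+1$'' summand in $M$ is precisely what absorbs the distinctness offsets. The same accounting shows the cell addresses $s(t_i)=s(x)+k$ remain bounded, so that all values of the recovered counter-model $(s,h)$ are bounded by $M$. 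Finally, Remark~\ref{r-small-model-entail} shows this all breaks once variable offsets $x\le x'+k$ are permitted, confirming that the restriction to minimal pointer arithmetic is essential.
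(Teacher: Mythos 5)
Your proposal is correct and follows exactly the route the paper intends: its entire proof of this theorem is the single line ``Follow the proof of Theorem~\ref{c-SAT-small-model}'', i.e.\ apply the constraint-graph/shortest-path construction to the quantifier-free formula $\gamma_A\wedge\neg(\gamma_B\wedge iso)$ from Theorem~\ref{t-entail-coNP-1}. You have simply filled in the details the paper leaves implicit (the reduction of $iso$-equalities to pairs of difference constraints and the offset bookkeeping for the bound $M$), which is a faithful and somewhat more careful rendering of the same argument.
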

\begin{proof}
 Follow the proof of Theorem~\ref{c-SAT-small-model}. 
\end{proof}  

 As for \mbox{$\CoNP$-}hardness
 even for minimal pointer arithmetic, we will use a construction
 similar to Definition~\ref{defn:3colour_to_sat}.

\begin{definition}  
\label{defn:3colour_to_coNP}
 Taking notations from Definition~\ref{defn:3colour_to_sat},
 we introduce a satisfiable $A'_G$ of the form:
\begin{equation}%
 \bigwedge_{i=1}^{n}(c_0+1\leq{c_{i}}\leq b)\colon
 \bigsepstar_{(v_i,v_j)\in E}\,
 c_{i} + e_{ij}\mapsto \nil{}\ *\ c_{j} + e_{ij} \mapsto \nil{}
                        \label{eq-sat-NP-hard-A}
\end{equation}%
 and a satisfiable $B'_G$ of the form:
\begin{equation}%
 (b\geq c_0+4)\ \wedge\ \bigwedge_{i=1}^{n}\,
(c_0+1\leq{c_{i}}\leq b)\colon
\hspace*{-2ex}
 \bigsepstar_{(v_i,v_j)\in E}
\hspace*{-2ex}\,
 c_{i} + e_{ij}\mapsto \nil{}\ *\ c_{j} + e_{ij} \mapsto \nil{}
                        \label{eq-sat-NP-hard-B}
\end{equation}%
\end{definition}

\begin{lemma}\label{lem:3color_to_coNP}
 Let $G$ be an instance of the \mbox{$3$-}colouring problem.
 Then\ \mbox{$A'_G\models B'_G$}
 is not valid iff
 there is a perfect \mbox{$3$-}colouring of\/~$G$.
\end{lemma}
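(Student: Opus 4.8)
The plan is to exploit the fact that $A'_G$ and $B'_G$ have \emph{identical} spatial parts, so that the entailment collapses to a purely arithmetical comparison of their pure parts. Both symbolic heaps describe, for every edge $(v_i,v_j)\in E$, exactly the same pair of cells $c_i+e_{ij}\mapsto\nil$ and $c_j+e_{ij}\mapsto\nil$; the sole difference is that $B'_G$ additionally demands $b\geq c_0+4$. Hence, for any stack $s$, the heap layout forced by $A'_G$ coincides with the one forced by $B'_G$, and the isomorphism condition $iso$ of Definition~\ref{d-entail-1} is automatically met whenever both pure parts hold under the same $s$.

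First I would argue that $A'_G\models B'_G$ fails precisely when there is a model $(s,h)\models A'_G$ with $s(b)\leq s(c_0)+3$. Indeed, any model of $A'_G$ already satisfies $\bigwedge_i(c_0+1\leq c_i\leq b)$ and carries a heap of the shape required by $B'_G$, so it can fail $B'_G$ only through the extra constraint $b\geq c_0+4$, that is, only when $s(b)\leq s(c_0)+3$.

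Next I would observe that such a counter-model is exactly a perfect $3$-colouring in disguise, reusing the analysis of Lemma~\ref{lem:3color_to_sat}. If $(s,h)\models A'_G$ and $s(b)\leq s(c_0)+3$, then every colour satisfies $s(c_0)+1\leq s(c_i)\leq s(c_0)+3$, so $b_i\defeq s(c_i)-s(c_0)\in\{1,2,3\}$; moreover the disjointness of the two cells attached to each edge $(v_i,v_j)$ forces $s(c_i)\neq s(c_j)$, hence $b_i\neq b_j$, yielding a proper $3$-colouring. Conversely, from a perfect $3$-colouring with colours $b_i$ I would build the counter-model by setting $s(c_i)=s(c_0)+b_i$ and $s(b)=s(c_0)+3$: within-edge disjointness follows from $b_i\neq b_j$, and cross-edge disjointness from Proposition~\ref{p-e-ij}, since confining all colours to $\{1,2,3\}$ keeps the addresses of distinct edges separated by the offset gap of at least~$n$.

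The only subtlety I anticipate is the bookkeeping of cell disjointness: one must confirm that, with all colours confined to $\{1,2,3\}$, the per-edge address spread (at most~$2$) is strictly smaller than the separation of at least~$n$ guaranteed by Proposition~\ref{p-e-ij}, so that no two cells from different edges can collide (the case of a single edge being vacuous). Beyond this routine check, the argument is a direct transcription of the satisfiability reduction of Lemma~\ref{lem:3color_to_sat}, now phrased as a \emph{failure} of entailment.
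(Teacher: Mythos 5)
Your proposal is correct and follows essentially the same route as the paper's own proof: both directions reduce to the observation that, since the spatial parts of $A'_G$ and $B'_G$ coincide, a model of $A'_G$ fails $B'_G$ exactly when $s(b)\leq s(c_0)+3$, at which point the analysis of Lemma~\ref{lem:3color_to_sat} identifies such models with perfect $3$-colourings. Your additional bookkeeping (the explicit $iso$ remark and the disjointness check via Proposition~\ref{p-e-ij}) merely spells out details the paper leaves implicit.
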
   
\begin{proof}
 Any perfect \mbox{$3$-}colouring of\/~$G$ yields
 a model \mbox{$(s,h)$}
 for $A'_G$ with \mbox{$s(b)=s(c_0)+3$},
 which implies that \mbox{$(s,h)\not\models B'_G$}
 because of \mbox{$s(b)\geq s(c_0)+4$} required there.

 Conversely, the implication of the fact that,
 for some model \mbox{$(s,h)$},
 we have \mbox{$(s,h)\models A'_G$}
 and \mbox{$(s,h)\not\models B'_G$}
 is that \mbox{$s(b)\geq s(c_0)+4$} is false.
 With the additional \mbox{$s(b)\leq s(c_0)+3$},\
 \mbox{$(s,h)\models A'_G$} provides
 a perfect \mbox{$3$-}colouring of\/~$G$.
\end{proof}  

\begin{theorem}\label{t-entail-coNP-0} 
 The entailment problem \mbox{$A \models B$} is \mbox{$\CoNP$-}hard,
 even for quantifier-free \red{satisfiable} formulas $A$ and $B$,
 both in minimal pointer arithmetic{}.
\end{theorem}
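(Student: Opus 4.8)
The plan is to establish $\CoNP$-hardness by reducing from the \emph{complement} of the $3$-colourability problem. Since $3$-colourability is $\NP$-complete~\cite{Garey-Johnson:79}, the problem of deciding whether a graph $G$ is \emph{not} $3$-colourable is $\CoNP$-complete, so it suffices to exhibit a polynomial-time many-one reduction from non-$3$-colourability to the entailment problem $A \models B$, with both sides satisfiable and in minimal pointer arithmetic.

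The reduction itself is already supplied by Definition~\ref{defn:3colour_to_coNP} together with Lemma~\ref{lem:3color_to_coNP}. Given an instance graph $G$ with $n$ vertices, I would map it to the pair of symbolic heaps $A'_G$ and $B'_G$. First I would observe that this map is computable in polynomial time: the offsets $e_{ij} = i\cdot n^2 + j\cdot n$ are of size polynomial in $n$, and the spatial part of each heap contains $2|E|$ points-to formulas, so both $A'_G$ and $B'_G$ are of size polynomial in $G$. I would also note that both heaps lie in minimal pointer arithmetic and are satisfiable, as recorded in Definition~\ref{defn:3colour_to_coNP}, so the side conditions of the theorem are met.

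The heart of the argument is the equivalence provided by Lemma~\ref{lem:3color_to_coNP}: $A'_G \models B'_G$ is \emph{not} valid iff $G$ admits a perfect $3$-colouring. Taking the contrapositive, $A'_G \models B'_G$ \emph{is} valid iff $G$ is not $3$-colourable. Thus any algorithm deciding validity of the entailment decides non-$3$-colourability of $G$ on the instances in the image of the reduction, and $\CoNP$-hardness follows. In analogy with the proof of Theorem~\ref{t-SAT-NP-1}, the formal proof then reduces to the single line ``Via the $3$-colourability problem, with Lemma~\ref{lem:3color_to_coNP}.''

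The step requiring the most care is matching the polarity of the reduction. Because the entailment problem is itself a \emph{universal} (``for all models'') statement, it aligns naturally with a $\CoNP$ problem, and the design of $B'_G$---in particular the additional pure constraint $b \geq c_0 + 4$, whose failure in a model of $A'_G$ forces the available colour palette down to the three values $\{1,2,3\}$ and thereby exposes a genuine $3$-colouring---is exactly what makes the equivalence of Lemma~\ref{lem:3color_to_coNP} invert validity and colourability. Once this correspondence is fixed, the conclusion is immediate, so I expect no further obstacle beyond confirming that the equivalence is stated in the orientation that turns a validity test into a non-colourability test.
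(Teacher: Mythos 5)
Your proposal is correct and follows exactly the route the paper intends: the theorem is an immediate consequence of Definition~\ref{defn:3colour_to_coNP} and Lemma~\ref{lem:3color_to_coNP}, whose equivalence (entailment invalid iff $G$ is $3$-colourable) you correctly contrapose into a polynomial-time reduction from non-$3$-colourability. Your added checks on polynomial size, satisfiability, and the polarity of the reduction are exactly the points the paper leaves implicit.
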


\begin{corollary}\label{c-entail-coNP-0} 
 The entailment problem \mbox{$A \models B$} is
 \mbox{$\CoNP$-}complete,
 even for the quantifier-free \red{satisfiable} formulas $A$ and $B$,
 both in minimal pointer arithmetic{}.
\end{corollary}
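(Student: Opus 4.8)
The plan is simply to combine the matching upper and lower bounds already established for this fragment. The upper bound is supplied by Theorem~\ref{t-entail-coNP-1}: since both $A$ and $B$ are quantifier-free, that result places \mbox{$A\models B$} in $\CoNP$. Concretely, I would recall that by Lemma~\ref{l-epsilon} a counter-model to \mbox{$A\models B$} corresponds to a solution of the formula~(\ref{eq-entail-coNP}), namely \mbox{$\exists\bar x\,(\gamma_A \land \neg(\gamma_B \wedge iso(\bar x,\bar y)))$}; this is an existentially closed quantifier-free Presburger formula, so by the same reasoning as in the proof of Theorem~\ref{t-SAT-NP-1} its satisfiability lies in $\NP$, whence the entailment problem itself lies in $\CoNP$.

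For the lower bound I would invoke Theorem~\ref{t-entail-coNP-0}, which gives $\CoNP$-hardness via the reduction from the $3$-colourability problem encoded by the satisfiable quantifier-free heaps $A'_G$ and $B'_G$ of Definition~\ref{defn:3colour_to_coNP}. By Lemma~\ref{lem:3color_to_coNP}, \mbox{$A'_G\models B'_G$} fails exactly when $G$ admits a perfect \mbox{$3$-}colouring, so the complement of entailment is $\NP$-hard and hence entailment is $\CoNP$-hard; crucially, $A'_G$ and $B'_G$ are both satisfiable and both lie in minimal pointer arithmetic, matching the hypotheses of the corollary.

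Putting the two bounds together yields $\CoNP$-completeness for exactly the announced class. There is no genuine obstacle remaining here --- all the substantive work resides in the two cited theorems (the Presburger encoding together with the $\NP$ bound for the upper direction, and the $3$-colourability reduction for the lower direction). The only point requiring care is to confirm that a \emph{single} restricted fragment, namely quantifier-free, satisfiable formulas in minimal pointer arithmetic, simultaneously witnesses both membership in $\CoNP$ and $\CoNP$-hardness. Since Theorem~\ref{t-entail-coNP-1} and Theorem~\ref{t-entail-coNP-0} both hold precisely over this fragment, the completeness claim follows immediately.
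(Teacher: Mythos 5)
Your proposal is correct and matches the paper's (implicit) argument exactly: the corollary is stated without a separate proof precisely because it is the immediate conjunction of the $\CoNP$ upper bound of Theorem~\ref{t-entail-coNP-1} and the $\CoNP$-hardness of Theorem~\ref{t-entail-coNP-0}, both of which hold over the same fragment of quantifier-free, satisfiable symbolic heaps in minimal pointer arithmetic. Your additional check that a single restricted fragment witnesses both bounds is exactly the right point to verify, and it holds here.
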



\section{Quantified entailments:
 The $\Pi^P_2$ upper bound}\label{s-P2-upper} 

 The $\Pi^P_2$ lower bound is given in Theorem~\ref{t-entail-P2-0}.
 For the case of quantified entailments in minimal pointer arithmetic,
 we establish here, Theorem~\ref{t-P2-upper},
 an upper bound also of $\Pi^P_2$,
 as well as the small model property.

 In fact we prove that the upper bound is the same,
 so that minimal pointer arithmetic is $\Pi^P_2$-complete,
 even for the full pointer arithmetic but with a fixed pointer offset,
 where we allow any Boolean combinations
 of the elementary formulas\ \mbox{$(x'=x+k_0)$},\
 \mbox{$(x'\leq x+k_0)$},\ and\ \mbox{$(x'<x+k_0)$},
 and, in addition to the points-to formulas,
 we allow spatial formulas of the arrays the length
 of which is $\leq k_0$
 and lists which length is $\leq k_0$
 where $k_0$ is a fixed integer.

\subsection{Entailment: 
 A running example}

\begin{example}\label{e-small-entail} 
 With this example, we illustrate the crucial steps on the road
 to a smaller model.

\noindent
 Assuming, for simplicity,\ \mbox{$x_1< x_2< x_3< x_4$},\
 let $A$  be of the form
{\small
\begin{equation}%
 (x_1 < x_2) \wedge 
 (x_2 < x_3) \wedge 
 (x_3 < x_4) \wedge 
 (x_3 \leq x_2+3)
\colon
 x_1 \mapsto \nil{}\ *\  x_3 \mapsto \nil{} 
                \label{ex-A}
\end{equation}%
}
 and $B$ be of the form
{\small
\begin{equation}%
 \exists y_1\exists y_2\, \exists y_3\exists y_4\,
 (y_2 = x_2) \wedge (y_4 = x_4) \wedge
 (y_2 \leq y_4-5) \wedge 
 (y_3 = y_1+7)
\colon
 y_1 \mapsto \nil{}\ *\  y_3 \mapsto \nil{} 
                \label{ex-B}
\end{equation}%
}%
 Then 
 $\gamma_A$ in fact is a conjunction
\begin{equation}%
 \gamma_A (x_1, x_2, x_3, x_4) =
\left\{\begin{array}{lcl}{}%
      x_1 & \leq & x_2-1,
\\{}%
      x_2 & \leq & x_3-1,
\\{}%
      x_3 & \leq & x_4-1,
\\{}%
      x_3 & \leq & x_2 + 3\ .
\end{array}\right.
                                  \label{ex-system-A}
\end{equation}
 and by Definition~\ref{d-graph},
 we can also construct the corresponding constraint graph,
  $\widetilde{G}_{A}$,
 the labelled edges of which are given as follows:
\begin{equation}%
 \widetilde{G}_{A} =
\left\{\begin{array}{lcl}{}%
   \node{x_1} &\ \stackrel{-1}{\longleftarrow}\ & \node{x_2}
\\{}%
   \node{x_2} &\ \stackrel{-1}{\longleftarrow}\ & \node{x_3}
\\{}%
   \node{x_3} &\ \stackrel{-1}{\longleftarrow}\ & \node{x_4}
\\{}%
   \node{x_3} &\ \stackrel{3}{\longleftarrow}\ & \node{x_2}
\end{array}\right.
                                  \label{ex-system-A-graph}
\end{equation}%

\comment{
\\
\begin{minipage}{.4\textwidth}
\begin{equation}%
 \gamma_A =
\left\{\begin{array}{lcl}{}%
      x_1 & \leq & x_2-1,
\\{}%
      x_2 & \leq & x_3-1,
\\{}%
      x_3 & \leq & x_4-1,
\\{}%
      x_3 & \leq & x_2 + 3\ .
\end{array}\right.
                                  \label{ex-system-A}
\end{equation}
\end{minipage}
\hspace*{\fill}
\begin{minipage}{.4\textwidth}
\begin{equation}%
 \widetilde{G}_{A} =
\left\{\begin{array}{lcl}{}%
   \node{x_1} &\ \stackrel{-1}{\longleftarrow}\ & \node{x_2}
\\{}%
   \node{x_2} &\ \stackrel{-1}{\longleftarrow}\ & \node{x_3}
\\{}%
   \node{x_3} &\ \stackrel{-1}{\longleftarrow}\ & \node{x_4}
\\{}%
   \node{x_3} &\ \stackrel{3}{\longleftarrow}\ & \node{x_2}
\end{array}\right.
                                  \label{ex-system-A-graph}
\end{equation}%
\end{minipage}
\\
} 

\noindent
 Because of an isomorphism between the spacial parts,
 \mbox{$iso(\bar x,\bar y)$},
 here we get the following:
\begin{equation}%
    iso(\bar x,\bar y)\ \equiv \
 ((y_1=x_1) \wedge (y_3=x_3))\bigvee ((y_1=x_3) \wedge (y_3=x_1))
           \label{eq-iso-exa}
\end{equation}%
 so that 
 the corresponding conclusion in (\ref{eq-entail-0}),
\ \mbox{$\exists\bar{y}\,
(\gamma_B(\bar x, \bar y) \wedge iso(\bar x, \bar y))$},\
 can be rewritten as disjunction of the form
\ \ \mbox{$\exists\bar{y}\,
  G_B^1(\bar x, \bar y)\vee G_B^2(\bar x, \bar y)$}\ \
 where $G_B^1$ and $G_B^2$ are given below: 
\vspace*{-2ex}
\begin{center}
\begin{minipage}{.45\textwidth}
\begin{equation}%
 G_B^1 =
\left\{\begin{array}{lcl}%
 \multicolumn{3}{c}{(y_1=x_1) \wedge (y_3=x_3);}
\\
      y_2 & = & x_2,
\\
      y_4 & = & x_4,
\\{}%
      y_2 & \leq & y_4 - 5,
\\{}%
      y_3 & = & y_1 + 7,
\end{array}\right.
                                  \label{ex-system-B-x1-y1}
\end{equation}%
\end{minipage}
\hspace*{\fill}
\begin{minipage}{.45\textwidth}
\begin{equation}%
 G_B^2 =
\left\{\begin{array}{lcl}
 \multicolumn{3}{c}{(y_1=x_3) \wedge (y_3=x_1);}
\\
      y_2 & = & x_2,
\\
      y_4 & = & x_4,
\\{}%
      y_2 & \leq & y_4 - 5,
\\{}%
      y_3 & = & y_1 + 7,
\end{array}\right.
                                  \label{ex-system-B-x1-y3}
\end{equation}%
\end{minipage}
\end{center}

\def\node#1{\widehat{#1}}

\def\CircNode#1
{\,\raisebox{3pt}{\mbox{\circle{16}}
 \makebox(0,0)[c]{$\widehat{#1}$}}\,}

 To simplify the case, notice that,
 for a fixed
 \ \mbox{$\gamma_A (x_1, x_2, x_3, x_4)$}\ from~(\ref{ex-system-A}),
 the right-hand system~(\ref{ex-system-B-x1-y3}) has no solutions
 because of the cycle with the negative weight:
 \ \mbox{$0-7+0-1-1$},\  
 (see Proposition~\ref{p-Cormen}):
 $$
  \CircNode{x_1}~%
  ~\stackrel{0}{\longrightarrow}~\node{y_3}~%
  ~\stackrel{-7}{\longrightarrow}~\node{y_1}~%
  ~\stackrel{0}{\longrightarrow}~~~\CircNode{x_3}~%
  ~\stackrel{-1}{\longrightarrow}~~~\CircNode{x_2}~%
  ~\stackrel{-1}{\longrightarrow}~~~\CircNode{x_1}
$$%
 Therefore we can confine our attention to the left-hand
 system~(\ref{ex-system-B-x1-y1}), so that
 here validity of\ \mbox{$A\models B$}\ is expressed
 by means of the formula
 $\varepsilon_{A,B}$:
\begin{equation}%
  \varepsilon_{A,B}\ \equiv\
 \forall \bar x\, (\gamma_A(\bar x) \to
 \exists\bar{y}\,  G_B^1(\bar x, \bar y))
                             \label{eq-entail-0-1-exa}
\end{equation}
\end{example}  

\def\node#1{\widehat{#1}}

\def\CircNode#1
{\,\raisebox{3pt}{\mbox{\circle{16}}
 \makebox(0,0)[c]{$\widehat{#1}$}}\,}

\subsection*{Example~\ref{e-small-entail}:
 A large counter-model}

\noindent
 In sequel we will show how to find a small counter-model
 for \ \mbox{$A\models B$},\ 
 given the following `large' counter-model \mbox{$(s,h)$}
 defined by the following\/~$s$
 (here $D$ is a very large number, say~$2^{10}$):  
\begin{equation}
\left\{\begin{array}{lcl}%
        s(x_2) & = & s(x_1) + 2D,
\\{}%
        s(x_3) & = & s(x_2) + 2,
\\{}%
        s(x_4) & = & s(x_3) + D,
\end{array}\right.
                          \label{eq-huge}
\end{equation}%
 First our \mbox{$(s,h)$}, a model for~$A$, is determined uniquely by
 the system:
\begin{equation}%
 \gamma_{A,s} (x_1, x_2, x_3, x_4) =
\left\{\begin{array}{lcl}%
        x_2 & = & x_1 + 2D,
\\{}%
        x_3 & = & x_2 + 2,
\\{}%
        x_4 & = & x_3 + D,
\end{array}\right.
                                         \label{eq-huge-0}
\end{equation}

 We treat\ \mbox{$x'=x+k$}\ as a pair of\ \mbox{$x'\leq x+k$}\
 represented by the edge
 \ \mbox{$\node{x}\, \stackrel{k}{\longrightarrow}\, \node{x'}$},\
 and\ \ \mbox{$x\leq x'-k$}\ \
 represented by the edge
 \ \mbox{$\node{x'}\, \stackrel{-k}{\longrightarrow}\, \node{x}$},\
 so that the corresponding constraint graph, $\widetilde{G}_{A,s}$,
 consists of the following pairs of edges
\begin{equation}%
   \node{x_1}\, \stackrel{2D}{\longrightarrow}\, \node{x_2},\ \
   \node{x_2}\, \stackrel{-2D}{\longrightarrow}\, \node{x_1},\ \
   \node{x_2}\, \stackrel{2}{\longrightarrow}\, \node{x_3},\ \
   \node{x_3}\, \stackrel{-2}{\longrightarrow}\, \node{x_2},\ \
   \node{x_3}\, \stackrel{D}{\longrightarrow}\, \node{x_4},\ \
   \node{x_4}\, \stackrel{-D}{\longrightarrow}\, \node{x_3},\
                                       \label{eq-huge-exact}
\end{equation}%

 Secondly, according to~(\ref{eq-entail-0-1-exa}),
 our \mbox{$(s,h)$} is not a model for~$B$ since
 for a fixed\ $\gamma_{A,s}$\ from~(\ref{eq-huge-0}),
 the following system has no solution:
\begin{equation}%
  \gamma_{A,s} \wedge G_B^1 =
\left\{\begin{array}{lcl}%
        x_2 & = & x_1 + 2D,
\\{}%
        x_3 & = & x_2 + 2,
\\{}%
        x_4 & = & x_3 + D,
\\
 \multicolumn{3}{c}{(y_1=x_1) \wedge (y_3=x_3);}
\\
      y_2 & = & x_2,
\\
      y_4 & = & x_4,
\\{}%
      y_2 & \leq & y_4 - 5,
\\{}%
      y_3 & = & y_1 + 7,
\end{array}\right.
                                  \label{ex-system-B-x1-y1+s}
\end{equation}%
 which is the case because
 of the cycle with the negative weight:
 \\ \mbox{$0-5+0-2D+0+7+0+D=-D+2$},\  
 (see Proposition~\ref{p-Cormen}):
\begin{equation}%
  \CircNode{x_4}~%
  ~\stackrel{0}{\longrightarrow}~\node{y_4}~%
  \stackrel{-5}{\longrightarrow}~\node{y_2}~%
  ~\stackrel{0}{\longrightarrow}~~~\CircNode{x_2}~%
  ~\stackrel{-2D}{\longrightarrow}~~~\CircNode{x_1}~%
  ~\stackrel{0}{\longrightarrow}~\node{y_1}~%
  \stackrel{7}{\longrightarrow}~\node{y_3}~%
  ~\stackrel{0}{\longrightarrow}~~~\CircNode{x_3}~%
  ~\stackrel{D}{\longrightarrow}~~~\CircNode{x_4}~%
                                         \label{eq-neg-D}
\end{equation}%


\subsection{Quantified Entailment: An upper bound}
\begin{theorem}\label{t-P2-upper}
 The entailment problem in minimal pointer arithmetic
 belongs to $\Pi^P_2$, which is the second class
 in the polynomial time hierarchy~\cite{Stockmeyer:77}.

 Moreover, given $A$ and\/ $B$, symbolic heaps
 in minimal pointer arithmetic,
\mbox{$A\models B$} is valid if and only if
 within the corresponding formula~(\ref{eq-Q})
 all $x_i$ are bounded by \ \mbox{$(n+1)\cdot M$}\
 and all $y_j$ by\ \mbox{$(n+m+2)\cdot M$},\
 where $M$ is defined as:
   $$M=\sum_i (|k_i|+1) $$%
 with $k_i$ ranging over all occurrences of `offsets' numbers
 occurred in\/ $A$ and\/~$B$.
\end{theorem}
\begin{proof}
 This follows from the small model property
 provided by Theorem~\ref{t-P2-small-model}
\end{proof}  

\begin{remark}\label{r-P2-upper}
 In fact we prove that the upper bound is the same, $\Pi^P_2$.
 so that
 the entailment problem in
 quantified minimal pointer arithmetic is $\Pi^P_2$-complete,
 even for the full pointer arithmetic but with a fixed pointer offset,
 where we allow any Boolean combinations
 of the elementary formulas\ \mbox{$(x'=x+k_0)$},\
 \mbox{$(x'\leq x+k_0)$},\ and\ \mbox{$(x'<x+k_0)$},
 and, on top of that,
 we allow spatial formulas of the arrays the length
 of which is $\leq k_0$
 and lists which length is $\leq k_0$
  where $k_0$ is a fixed integer.
\qed
\end{remark}  

\subsection{Small model property. Quantified Entailment}
 To prove Theorem~\ref{t-P2-upper},
 we rely upon the following small model property for
 quantified minimal pointer arithmetic.

\begin{theorem}[``the small model property'']
\label{t-P2-small-model}
 Given $A$ and\/ $B$, quantified symbolic heaps
 in minimal pointer arithmetic,
 suppose that\ \mbox{$A\models B$}\ is encoded by
 a formula~(\ref{eq-entail-0})
 in Definition~\ref{d-entail-1}.

 In the case where \mbox{$A\models B$} is not valid,
 we can find a counter-model \mbox{$(s,h)$}
 such that\ \mbox{$(s,h)\models A$}\
 but\ \mbox{$(s,h)\not\models B$},\ 
 in which all $x$-values
 are bounded by \ \mbox{$(n+1)\cdot M$}\
 and all $y$-values are bounded by\ \ \mbox{$(n+m+2)\cdot M$},\
 where $M$ is defined as:
    $$M=\sum_i (|k_i|+1) $$%
 with $k_i$ ranging over all occurrences of `offsets' numbers
 occurred in\/ $A$ and\/~$B$.
\end{theorem}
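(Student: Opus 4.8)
The plan is to collapse the entire counter-model condition into a single system of difference constraints on the $x$-variables and then invoke the single-formula small-model property (Theorem~\ref{t-small}, via shortest paths). By Lemma~\ref{l-epsilon} and the rewriting in Definition~\ref{d-entail-1}, $A\models B$ fails iff there is an integer valuation $\bar a$ of $\bar x$ with $\gamma_A(\bar a)$ true and, for every disjunct $G_B^r(\bar x,\bar y)$ of $\exists\bar y(\gamma_B\wedge iso)$ (the matchings $r$ between the $\ell$ cells of $A$ and the $\ell'$ cells of $B$, as in Example~\ref{e-small-entail}), the difference system $G_B^r(\bar a,\bar y)$ is unsatisfiable in $\bar y$. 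I start from the given, possibly huge, counter-model $\bar a$ and shrink it.

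The main step is the reduction of ``$G_B^r(\bar a,\cdot)$ is unsatisfiable'' to a difference constraint on $\bar a$. Fix $r$ and encode satisfiability of $G_B^r(\bar a,\cdot)$ as a constraint graph over the nodes $y_1,\dots,y_m$ together with a single reference node $z_0$ of value $0$: each bound $y_p\leq a_i+c$ (resp.\ $y_p\geq a_i+c$) arising from $\gamma_B$ or $iso$ becomes an edge incident to $z_0$ of weight $a_i+c$ (resp.\ $-(a_i+c)$), while the purely $\bar y$-constraints contribute edges of weight independent of $\bar a$. By Proposition~\ref{p-Cormen} the system is unsatisfiable iff this graph has a negative cycle, and a simple negative cycle either (i) avoids $z_0$, giving a negative $\bar y$-cycle whose existence does not depend on $\bar a$, or (ii) passes through $z_0$ exactly once, using one ``upper'' edge $z_0\to y_p$ of weight $a_i+c$ and one ``lower'' edge $y_q\to z_0$ of weight $-(a_j+c')$. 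In case (ii) the cycle weight is $(a_i-a_j)+(c-c'+\delta)$, with $\delta$ the weight of a $\bar y$-path from $y_p$ to $y_q$, so negativity is exactly the single difference constraint $a_i\leq a_j-(c-c'+\delta)-1$ on $\bar a$, whose offset is bounded by a shortest $\bar y$-path and hence by $M$.

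Now I conjoin. Let $\Gamma^{*}=\gamma_A\wedge\bigwedge_r\phi_r$, where for each non-structural $r$ I select a difference constraint $\phi_r$ of type (ii) that the original $\bar a$ satisfies (one exists since $G_B^r(\bar a,\cdot)$ is unsatisfiable); structural $r$ contribute nothing. Then $\Gamma^{*}$ is a satisfiable conjunction of difference constraints, and---reading the reduction backwards---\emph{every} solution $\bar a'$ of $\Gamma^{*}$ is again a counter-model: the structural $r$ remain unsatisfiable automatically, and each $\phi_r$ forces the corresponding negative cycle for the non-structural $r$, while $\gamma_A(\bar a')$ lets us reinstate the disjoint cells as a heap $h'$ with $(s',h')\models A$ and $(s',h')\not\models B$. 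Although there may be exponentially many matchings $r$, the constraints $\phi_r$ involve only the $O(n^2)$ ordered pairs $(x_i,x_j)$, so keeping the tightest one per pair leaves a graph on the $n{+}1$ nodes $\widehat{x_0},\dots,\widehat{x_n}$ with every edge weight bounded by $M$. Applying the construction of Theorem~\ref{t-small} (set $s(x_0)$ maximal and $s(x_i)=s(x_0)+d_i$ with $d_i$ the shortest-path weight from $\widehat{x_0}$), and using that a shortest simple path has at most $n$ edges, yields a counter-model with all $x$-values bounded by $(n+1)M$.

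For the $\bar y$-bound, at the resulting small $\bar a'$ any $r$ with $G_B^r(\bar a',\cdot)$ satisfiable is a difference system in $\bar y$ whose constants $a'_i$ are already bounded by $(n+1)M$; applying the shortest-path construction once more over the $m{+}1$ nodes $\{y_j\}\cup\{z_0\}$, where a path may first pick up one $x$-bound ($\leq (n+1)M$) and then traverse at most $m$ further edges (each $\leq M$), produces a witness $\bar y$ bounded by $(n+m+2)M$, so the inner existential need only range over such $\bar y$. The delicate point is precisely the case analysis (i)/(ii): routing \emph{all} $x$-dependent bounds through one reference node guarantees that a simple negative cycle can touch at most two of them, which is what keeps the full counter-model condition inside difference logic and makes the polynomial shortest-path bounds---and the robustness of Remark~\ref{r-P2-upper} under adding short arrays and lists---go through.
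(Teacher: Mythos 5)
Your argument is sound and reaches the theorem by a genuinely different route from the paper's. The paper fixes the large counter-model exactly as a chain \mbox{$x_{i+1}=x_i+d_{i,i+1}$}, \emph{contracts} every gap \mbox{$d_{i,i+1}>M$} down to $M$, and then must prove (Lemma~\ref{l-small}) that each negative cycle certifying \mbox{$\lnot\exists\bar y\,G_B^r$} survives the contraction; this forces the four-case surgery (a1)/(a2)/(b1)/(b2) on cycle segments, precisely because in that encoding the $x$-variables are separate nodes and a simple negative cycle may weave in and out of the $x$-chain several times (the ``edge of disaster'' example shows the naive contraction can flip a cycle's sign). You instead substitute the concrete $\bar a$ \emph{before} building the constraint graph, so all $x$-dependence is funnelled through the single reference node $z_0$; a simple negative cycle then mentions at most one pair \mbox{$(a_i,a_j)$}, and its negativity is itself a difference constraint $\phi_r$ on $\bar x$ with offset bounded by $M$. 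Conjoining these $\phi_r$ with $\gamma_A$ reduces the two-quantifier problem to the already-established one-quantifier small-model property (Theorem~\ref{t-small}), and every solution of the combined system is \emph{automatically} a counter-model, so no survival argument is needed. Your route eliminates the delicate cycle surgery (including the paper's unproven ``NB'' that case (a2) or (b2) must eventually occur) and makes the certificate structure explicit --- one difference constraint per matching, tightest per ordered pair --- which also supports the efficiency claims of Remark~\ref{r-small-low-degree}; the paper's route, in exchange, exhibits the small model as an explicit deformation of the given one. Two points you should still spell out, though neither is a real gap: (i) before forming $\Gamma^{*}$ you must also fix the DNF disjunct of $\gamma_A$ satisfied by $\bar a$ (Proposition~\ref{p-gamma}), since $\gamma_A$ contains the disjunctive distinctness constraints and only then is $\Gamma^{*}$ a pure conjunction of difference constraints; and (ii) the bookkeeping showing that the offset of each $\phi_r$ stays within $M$ because a simple $\bar y$-path uses each constraint occurrence at most once --- the paper is no more explicit on this point, but your bound \mbox{$(n+1)\cdot M$} depends on it.
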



\comment{
\maxcomment{Everywhere - a small degree polynomial,
 in fact, minimal paths with negative weights allowed.

 For the minimal pointer arithmetic, however,
 we can improve the upper bound significantly, 
 up to $\Pi^P_2$, the second class in the polynomial hierarchy,
 by constructing an efficient procedure for the entailment problem,
 in which the corresponding polytime sub-procedures are running
 as the shortest paths procedures with negative weights allowed
 (e.g., Bellman-Ford algorithm),
 with providing polynomials of low degrees
}
} 


\comment{

\begin{proof}
 \mbox{$A\models B$} is not valid iff the following holds:
\begin{equation}%
\exists \bar x\, (\gamma_A(\bar x) \land \forall\bar{y}
 \neg(\gamma_B(\bar x, \bar y)
  \wedge iso(\bar x, \bar y)))
                             \label{eq-entail-P2}
\end{equation}%

} 


\comment{
(\ref{eq-entail-0})
\begin{equation}%
  \varepsilon_{A,B}\ =\ 
 \forall \bar x\, (\gamma_A(\bar x) \to
 \exists\bar{y}\,
 (\gamma_B(\bar x, \bar y) \wedge iso(\bar x, \bar y)))
\end{equation}%
} 


\begin{proof}(Sketch) 

 For the sake of non-negative solutions, with $x_1$ as a ``zero'' node,
 $y_m$  as a ``maximum node'',
 we will assume that
 \ \mbox{$x_1 < x_2 < \dots < x_n$},\ and add, if necessary,
 that\ \mbox{$x_n \leq y_m$},\ and for all $y_j$,
 \ \mbox{$x_1 \leq y_j \leq y_m$}.

 Let\/ \mbox{$(s,h)$} be a concrete counter-model
 for\ \mbox{$A\models B$},\
 such that \ \mbox{$s(x_1)=0$},\ 
 and, as a model for~$A$,
 \ \mbox{$(s,h)$}\ be determined uniquely by
 the system:
\begin{equation}%
 \gamma_{A,s} (x_1, x_2, \dots, x_n) =
 \bigwedge_{i=1}^{n-1} (x_{i+1} = x_{i} +d_{i,i+1})
                    \label{eq-gamma-s}
\end{equation}%
 where  for all\/\ \mbox{$1\leq i<j\leq n$},
 the $d_{ij}$ is defined as: 
\begin{equation}%
      d_{ij} = s(x_{j}) - s(x_{i})   \label{eq-d-ij}
\end{equation}%

 Following Proposition~\ref{p-gamma},
 the fact that\ \mbox{$(s,h)$}\ is not a model for~$B$
 means that for a certain Boolean function $f_{A,B}$,
 whatever a Boolean vector
 \ \mbox{$\bar\zeta = \zeta_1,.., \zeta_\ell$}\
 such that\ \mbox{$f_{A,B}(\zeta_1,..,\zeta_\ell) = \top$}\
 we take,
 the following system, $G_{A,B,s,\bar\zeta}$,
 has no integer solution
 for a fixed\ $\gamma_{A,s}$\ from~(\ref{eq-gamma-s}),
\begin{equation}%
\left\{\begin{array}{lcl}%
\multicolumn{3}{l}{
 \gamma_{A,s} (x_1, x_2, \dots, x_n)}
\\
      Z_1 & \equiv & \zeta_1,
\\{}%
      \dots & \dots & \dots,
\\{}%
      Z_\ell & \equiv & \zeta_\ell\ .
\end{array}\right.
                                  \label{eq-gamma-system-s}
\end{equation}%

\subsection*{The small counter-model}

 Given~$M$, we introduce a small counter-model\/ \mbox{$(s',h')$}
 by contracting large gaps $d_{i,i+1}$ to smaller ones, $M$,
 as follows:
\begin{equation}%
 \gamma_{A,s'} (x_1, x_2, \dots, x_n) =
 \bigwedge_{i=1}^{n-1} (x_{i+1} = x_{i} +d'_{i,i+1})
                    \label{eq-gamma-s'}
\end{equation}%
 where
\begin{equation}%
  s'(x_{i+1}) :=
\left\{\begin{array}{lcl}{}%
    s'(x_{i}) + d_{i,i+1},&& \mbox{if $d_{i,i+1}\leq M$}
\\{}%
    s'(x_{i}) + M, && \mbox{otherwise}
\end{array}\right.
   \label{eq-s'-i}
\end{equation}%
 For all\/\ \mbox{$1\leq i<j\leq n$},
 we define $d'_{ij}$ as: 
\begin{equation}%
      d'_{ij} = s'(x_{j}) - s'(x_{i})   \label{eq-d'-ij}
\end{equation}%

\subsection*{Example~\ref{e-small-entail}:
 On the edge of disaster}

 Thus, within Example~\ref{e-small-entail} 
 a smaller model \mbox{$(s',h')$}
 is defined with the following\/~$s'$
\begin{equation}
\left\{\begin{array}{lcl}%
        s'(x_2) & = & s'(x_1) + M,
\\{}%
        s'(x_3) & = & s'(x_2) + 2,
\\{}%
        s'(x_4) & = & s'(x_3) + M.
\end{array}\right.
                          \label{eq-small}
\end{equation}%

 To show that \mbox{$(s',h')$} is not a model for~$B$,
 we have to prove that
 the following system has no solution,
 cf.~(\ref{ex-system-B-x1-y1+s}):
\begin{equation}%
  \gamma_{A,s'} \wedge G_B^1 =
\left\{\begin{array}{lcl}%
        x_2 & = & x_1 + M,
\\{}%
        x_3 & = & x_2 + 2,
\\{}%
        x_4 & = & x_3 + M,
\\
 \multicolumn{3}{c}{(y_1=x_1) \wedge (y_3=x_3);}
\\
      y_2 & = & x_2,
\\
      y_4 & = & x_4,
\\{}%
      y_2 & \leq & y_4 - 5,
\\{}%
      y_3 & = & y_1 + 7,
\end{array}\right.
                     \label{ex-system-B-x1-y1+s'}
\end{equation}%

 A natural idea behind our construction
 to detect a cycle with the negative weight for \mbox{$(s',h')$},
 is to take (\ref{eq-neg-D}) defined in terms of\ \mbox{$(s,h)$},
 and then transform it into a hopefully negative cycle
 in terms of\ \mbox{$(s',h')$} by
 replacing its large $D$ and $2D$
 with the modest~$M$, resulting in a cycle of the form
\begin{equation}%
  \CircNode{x_4}~%
  ~\stackrel{0}{\longrightarrow}~\node{y_4}~%
  \stackrel{-5}{\longrightarrow}~\node{y_2}~%
  ~\stackrel{0}{\longrightarrow}~~~\CircNode{x_2}~%
  ~\stackrel{-M}{\longrightarrow}~~~\CircNode{x_1}~%
  ~\stackrel{0}{\longrightarrow}~\node{y_1}~%
  \stackrel{7}{\longrightarrow}~\node{y_3}~%
  ~\stackrel{0}{\longrightarrow}~~~\CircNode{x_3}~%
  ~\stackrel{M}{\longrightarrow}~~~\CircNode{x_4}~%
                                         \label{eq-plus-D}
\end{equation}%
 But the weight of this cycle happens to be {\bf positive}.
\qed

 The challenge to our construction can be resolved
 by the following lemma.
\begin{lemma}\label{l-small}
 Having got a cycle ${\cal C}$ with the negative weight
 for (\ref{eq-gamma-system-s}),
 we can extract
 a smaller cycle with the negative weight
 for (\ref{eq-gamma-system-s}), which is good for\ \mbox{$(s',h')$},\
 as well.  
\end{lemma}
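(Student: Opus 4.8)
The plan is to read the system~(\ref{eq-gamma-system-s}) as a constraint graph in the sense of Definition~\ref{d-graph} and Proposition~\ref{p-Cormen}, and to observe that its edges fall into two kinds. The \emph{chain edges} are those of $\gamma_{A,s}$: between consecutive nodes they carry the weights $\pm d_{i,i+1}$, and these are exactly the weights that the contraction~(\ref{eq-s'-i}) replaces by $\pm\min(d_{i,i+1},M)$ on passing to $(s',h')$. Every other edge — those coming from the fixed choice $\bar\zeta$, i.e.\ from the constraints $Z_i\equiv\zeta_i$ — carries a weight that is one of the normalised offsets $k_i$ and is therefore \emph{unchanged} by contraction. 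Thus the two constraint graphs, for $s$ and for $s'$, differ only on the chain edges. Writing $W_s(\cdot)$, $W_{s'}(\cdot)$ for the total edge-weight of a cycle under the $s$- and $s'$-weights, I may assume $\mathcal C$ is a \emph{simple} negative cycle, since any negative cycle contains one.

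I would then decompose $\mathcal C$ along its chain edges. Contracting each maximal run of chain edges to one super-edge (whose $s$-weight telescopes to $s(x_\beta)-s(x_\alpha)$) turns $\mathcal C$ into an alternation of non-chain segments and chain super-edges; write the $t$-th non-chain segment as a path from $x_{\alpha_t}$ to $x_{\beta_t}$ of fixed weight $c_t$. A one-line reindexing gives
\[
 W_s(\mathcal C)=\sum_t c_t+\sum_t\bigl(s(x_{\alpha_{t+1}})-s(x_{\beta_t})\bigr)=\sum_t\Bigl(c_t-\bigl(s(x_{\beta_t})-s(x_{\alpha_t})\bigr)\Bigr),
\]
so that $W_s(\mathcal C)=\sum_t W_s(\mathcal E_t)$, where the \emph{elementary cycle} $\mathcal E_t$ is the $t$-th non-chain segment followed by the direct chain path from $x_{\beta_t}$ back to $x_{\alpha_t}$. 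As $W_s(\mathcal C)<0$, some $\mathcal E_{t_0}$ has $W_s(\mathcal E_{t_0})<0$; this single-detour cycle is the smaller cycle I keep.

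It remains to show $\mathcal E_{t_0}$ is still negative for $(s',h')$. Write $\alpha=\alpha_{t_0}$, $\beta=\beta_{t_0}$, $c=c_{t_0}$; negativity for $s$ says $s(x_\beta)-s(x_\alpha)>c$, and I want $s'(x_\beta)-s'(x_\alpha)>c$. The decisive point is that $c$, being the weight of a \emph{simple} non-chain path, satisfies $|c|<M$: by the definition $M=\sum_i(|k_i|+1)$, $M$ strictly exceeds $\sum$ of the absolute offsets along any simple non-chain path. Now argue by cases. If $\alpha=\beta$ the chain part is empty and $W_{s'}(\mathcal E_{t_0})=c=W_s(\mathcal E_{t_0})<0$. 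If $\alpha>\beta$ then $s(x_\beta)-s(x_\alpha)=-\sum d_i>c$ forces $c<0$, and since $\min(d_i,M)\le d_i$ we get $s'(x_\beta)-s'(x_\alpha)=-\sum\min(d_i,M)\ge-\sum d_i>c$. If $\alpha<\beta$ and every gap in range is $\le M$, then the $s'$- and $s$-displacements coincide and negativity is inherited; while if some gap exceeds $M$, that gap alone contributes $M$ to $s'(x_\beta)-s'(x_\alpha)=\sum\min(d_i,M)$, so $s'(x_\beta)-s'(x_\alpha)\ge M>c$. In every case $s'(x_\beta)-s'(x_\alpha)>c$, i.e.\ $\mathcal E_{t_0}$ is negative for $(s',h')$ too.

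The step I expect to be the crux is this final case analysis, resting on the bound $|c|<M$. It is exactly here that the naive idea breaks: keeping $\mathcal C$ intact and merely shrinking its gaps can flip a negative cycle to a positive one — precisely what happens in the running example when the negative cycle~(\ref{eq-neg-D}) becomes the positive cycle~(\ref{eq-plus-D}) — the damage being done by a \emph{backward} traversal of a large gap. Passing to an elementary cycle strips away all but one detour and forces the one surviving large gap to be crossed by the chain-return in the direction that only helps, while the bound $|c|<M$ — immediate from Definition~\ref{defn:gamma-1} (each $Z_i$ is normalised to ``$x'\le x+k_i$'') together with $M=\sum_i(|k_i|+1)$ — guarantees that a single contracted gap already out-weighs the whole detour. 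This is exactly why $M$ must be taken as the \emph{sum} $\sum_i(|k_i|+1)$ and not merely the largest offset.
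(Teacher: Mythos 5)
Your proof is correct and, although it lands on exactly the same two critical sub-cases as the paper, it reaches them by a genuinely different and arguably tighter route. The paper proceeds by \emph{iteratively rewriting} the negative cycle ${\cal C}$: whenever a non-chain detour from $x_j$ to $x_i$ (resp.\ from $x_i$ to $x_j$) is no cheaper than the direct chain path, it is replaced by the corresponding chain edge (its cases (a1)/(b1)), which preserves negativity; the process is supposed to terminate in a single-detour negative cycle, to which the contraction argument is then applied (cases (a2)/(b2)) --- but the claim that one of (a2)/(b2) must eventually occur is only asserted as an ``NB'' without proof. You obtain the single-detour cycle in one shot: since the chain edges encode equalities, each maximal chain run telescopes to $s(x_{\alpha_{t+1}})-s(x_{\beta_t})$, giving the identity $W_s({\cal C})=\sum_t W_s({\cal E}_t)$ over the elementary cycles, whence some ${\cal E}_{t_0}$ is already negative. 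This supplies precisely the justification the paper leaves open, and your final three-way case analysis ($\alpha=\beta$; $\alpha>\beta$; $\alpha<\beta$ with or without an oversized gap) coincides with the paper's (a2)/(b2), resting on the same two facts: $\min(d_{i,i+1},M)\le d_{i,i+1}$, and $c<M$ for a simple non-chain path. One cosmetic point: you only ever use the one-sided bound $c<M$ (which does hold, as each edge label, whether $k_i$, $-k_i$ or $-k_i-1$, is strictly below $|k_i|+1$); the two-sided claim $|c|<M$ can fail at $c=-M$ when every constraint is traversed via a label $-k_i-1$ with $k_i\ge 0$, but that side is never needed.
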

\begin{proof}
 We introduce the following reductions for \ \mbox{$i<j$}:
\begin{itemize}
\item[(a)]
 Let\/\hspace*{4ex}
\begin{equation}%
\mbox{$\CircNode{x_j}~%
  ~\stackrel{}{\longrightarrow}~\node{y}~%
  \stackrel{\sigma}{\Longrightarrow}~\node{y'}~%
  \stackrel{}{\longrightarrow}~~\CircNode{x_i}~%
$}
                         \label{eq-a}
\end{equation}%
 be a part of\/~${\cal C}$, 
 which does not use edges from  $\gamma_{A,s}$, see~(\ref{eq-gamma-s}).
 Here $\sigma$ is the sum of all integers the edges invoked in
 this part are labelled by.

 We consider two cases:

 \begin{itemize}
 \item[(a1)] Let\/ \ \mbox{$d_{ij}+\sigma \geq 0$}.\ \

 Then we replace the above part~(\ref{eq-a})
 with 
\begin{equation}%
\mbox{$\CircNode{x_j}~%
  ~\stackrel{-d_{ij}}{\longrightarrow}~~\CircNode{x_i}~%
$}
                         \label{eq-a1}
\end{equation}%

 Since\ \mbox{$-d_{ij}\leq\sigma$},\ the weight of
 the whole updated ${\cal C}$ remains negative.

 E.g.,
 in Example~\ref{e-small-entail} with its negative~(\ref{eq-neg-D}),
 the following part of this cycle:
$$
  \CircNode{x_4}~%
  ~\stackrel{0}{\longrightarrow}~\node{y_4}~%
  \stackrel{-5}{\longrightarrow}~\node{y_2}~%
  ~\stackrel{0}{\longrightarrow}~~~\CircNode{x_2}~%
 $$%
 can be replaced with
$$
  \CircNode{x_4}~%
  ~\stackrel{-D}{\longrightarrow}~~~\CircNode{x_3}~%
  ~\stackrel{-2}{\longrightarrow}~~~\CircNode{x_2}~%
 $$%
 resulting in a still negative cycle
\begin{equation}%
  \CircNode{x_4}~%
  ~\stackrel{-D}{\longrightarrow}~~~\CircNode{x_3}~%
  ~\stackrel{-2}{\longrightarrow}~~~\CircNode{x_2}~%
  ~\stackrel{-2D}{\longrightarrow}~~~\CircNode{x_1}~%
  ~\stackrel{0}{\longrightarrow}~\node{y_1}~%
  \stackrel{7}{\longrightarrow}~\node{y_3}~%
  ~\stackrel{0}{\longrightarrow}~~~\CircNode{x_3}~%
  ~\stackrel{D}{\longrightarrow}~~~\CircNode{x_4}~%
                                         \label{eq-neg-D-1}
\end{equation}%
 \item[(a2)] Let\/ \ \mbox{$d_{ij}+\sigma < 0$}.\ \

  Then we can identify the following cycle with a negative weight:
 \hspace*{4ex}
\begin{equation}%
\mbox{$\CircNode{x_j}~%
  ~\stackrel{}{\longrightarrow}~\node{y}~%
  \stackrel{\sigma}{\Longrightarrow}~\node{y'}~%
  \stackrel{}{\longrightarrow}~~\CircNode{x_i}~%
  \stackrel{d_{ij}}{\longrightarrow}~~\CircNode{x_j}~%
$}
                \label{eq-a2}
\end{equation}

 Since \ \mbox{$d_{ij}< -\sigma \leq M$},\ we have
 \mbox{$d'_{ij}=d_{ij}$}, and hence   
 this smaller cycle with the negative weight
 is good for\ \mbox{$(s',h')$},\ as well.  
 \end{itemize}

\item[(b)]
 Let\/\hspace*{4ex}
\begin{equation}%
\mbox{$\CircNode{x_i}~%
  ~\stackrel{}{\longrightarrow}~\node{y}~%
  \stackrel{\sigma}{\Longrightarrow}~\node{y'}~%
  \stackrel{}{\longrightarrow}~~\CircNode{x_j}~%
$}\hspace*{3ex}%
                                         \label{eq-b}
\end{equation}%
 be a part of ${\cal C}$, 
 which does not use edges from  $\gamma_{A,s}$, see~(\ref{eq-gamma-s}).
 Here $\sigma$ is the sum of all integers the edges invoked in
 this part are labelled by.

 \begin{itemize}
 \item[(b1)] Let\/ \ \mbox{$d_{ij}\leq \sigma $}.\ \

 Then we replace the above part~(\ref{eq-b})
 with 
\begin{equation}%
\mbox{$\CircNode{x_i}~%
  ~\stackrel{d_{ij}}{\longrightarrow}~~\CircNode{x_j}~%
$}
                         \label{eq-b1}
\end{equation}%

 Since\ \mbox{$d_{ij}\leq\sigma$},\ the weight of
 the whole updated ${\cal C}$ remains negative.

 \item[(b2)] Let\/ \ \mbox{$d_{ij}>\sigma $}.\ \

  Then we can identify the following cycle with a negative weight:
 \hspace*{4ex}
\begin{equation}%
\mbox{$\CircNode{x_i}~%
  ~\stackrel{}{\longrightarrow}~\node{y}~%
  \stackrel{\sigma}{\Longrightarrow}~\node{y'}~%
  \stackrel{}{\longrightarrow}~~\CircNode{x_j}~%
  ~\stackrel{\ -d_{ij}}{\longrightarrow}~~\CircNode{x_i}~%
$}\hspace*{3ex}%
                \label{eq-b2}
\end{equation}

 Suppose that for all $k$ such that\ \mbox{$i\leq k< j$},
 \ \mbox{$d_{k,k+1}\leq M$}.
 Then \mbox{$d'_{ij}=d_{ij}$}, and hence   
 this smaller cycle with the negative weight
 is good for\ \mbox{$(s',h')$},\ as well.  

 Otherwise, for some $k$ such that\ \mbox{$i\leq k< j$},
 \ \mbox{$d_{k,k+1} > M$},\ and thereby by construction
 \ \mbox{$d'_{k,k+1} = M$},\
 and, hence, \ \mbox{$d'_{ij} \geq M$}.\

 Then the following cycle defined in terms of\ \mbox{$(s',h')$},\
\begin{equation}%
\mbox{$\CircNode{x_i}~%
  ~\stackrel{}{\longrightarrow}~\node{y}~%
  \stackrel{\sigma}{\Longrightarrow}~\node{y'}~%
  \stackrel{}{\longrightarrow}~~\CircNode{x_j}~%
  ~\stackrel{\ -d'_{ij}}{\longrightarrow}~~\CircNode{x_i}~%
$}\hspace*{3ex}%
                \label{eq-b2'}
\end{equation}
 is of negative weight, since
 \ \ \mbox{$\sigma-d'_{ij}\leq \sigma- M <0 $}.\ \ 

 E.g.,
 in Example~\ref{e-small-entail} with its negative~(\ref{eq-neg-D}),
 the following part of this cycle:
$$
  \CircNode{x_1}~%
  ~\stackrel{0}{\longrightarrow}~\node{y_1}~%
  \stackrel{7}{\longrightarrow}~\node{y_3}~%
  ~\stackrel{0}{\longrightarrow}~~~\CircNode{x_3}~%
 $$%
 provides a shorter negative cycle in terms of\ \mbox{$(s,h)$}:
$$
  \CircNode{x_1}~%
  ~\stackrel{0}{\longrightarrow}~\node{y_1}~%
  \stackrel{7}{\longrightarrow}~\node{y_3}~%
  ~\stackrel{0}{\longrightarrow}~~~\CircNode{x_3}~%
  ~\stackrel{-2}{\longrightarrow}~~~\CircNode{x_2}~%
  ~\stackrel{-2D}{\longrightarrow}~~~\CircNode{x_1}~%
 $$%
 which can be transformed into a negative cycle
 in terms of\ \mbox{$(s',h')$}:
$$
  \CircNode{x_1}~%
  ~\stackrel{0}{\longrightarrow}~\node{y_1}~%
  \stackrel{7}{\longrightarrow}~\node{y_3}~%
  ~\stackrel{0}{\longrightarrow}~~~\CircNode{x_3}~%
  ~\stackrel{-2}{\longrightarrow}~~~\CircNode{x_2}~%
  ~\stackrel{-M}{\longrightarrow}~~~\CircNode{x_1}~%
 $$%
\end{itemize}
\end{itemize}
 {\bf NB:} We can prove that always the case~(a2) or case~(b2)
 must happen.
\end{proof}  

\noindent
 This concludes the proof of Lemma~\ref{l-small} and thereby
 of Theorem~\ref{t-P2-small-model}.
 \end{proof}  

\comment{
\begin{theorem}\label{t-P2-upper}
 The entailment problem in minimal pointer arithmetic
 belongs to $\Pi^P_2$, which is the second class
 in the polynomial time hierarchy~\cite{Stockmeyer:77}.

 Moreover, given $A$ and\/ $B$, symbolic heaps
 in minimal pointer arithmetic,
\mbox{$A\models B$} is valid if and only if
 within the corresponding formula~(\ref{eq-Q})
 all $x_i$ are bounded by \ \mbox{$(n+1)\cdot M$}\
 and all $y_j$ by\ \mbox{$(n+m+2)\cdot M$},\
 where $M$ is defined as:
   $$M=\sum_i (|k_i|+1) $$%
 with $k_i$ ranging over all occurrences of `offsets' numbers
 occurred in\/ $A$ and\/~$B$.
\end{theorem}
} 

\comment{
 For the minimal pointer arithmetic, however,
 we can improve the upper bound significantly, 
 up to $\Pi^P_2$, the second class in the polynomial hierarchy,
 by constructing an efficient procedure for the entailment problem,
 in which the corresponding polytime sub-procedures are running
 as the shortest paths procedures with negative weights allowed
 (e.g., Bellman-Ford algorithm),
 with providing polynomials of low degrees
} 

\begin{remark}\label{r-small-low-degree}
 The proof of Theorem~\ref{t-P2-small-model}
 provides quite efficient procedures for the entailment problem
 in Theorem~\ref{t-P2-upper},
 in which the corresponding polytime sub-procedures are running
 as the shortest paths procedures with negative weights allowed
 with providing polynomials of low degrees.
\qed
\end{remark}


\section{Conclusions}  
\label{sec:conclusion}

In this paper, we study the points-to fragment of symbolic-heap
separation logic extended with pointer arithmetic, both in a minimal
form allowing only conjunctions of difference constraints
 \mbox{$x' \leq x \pm k$}, and in a fuller form
 admitting Boolean combinations of elementary formulas over 
 pointer/offset sums.
 We establish upper and lower complexity bounds for
satisfiability and quantified/unquantified entailment for both our
variants of $\SL$ pointer arithmetic, as summarised in
Table~\ref{t-summary}.

  Perhaps surprisingly, we find that polynomial
time algorithms are out of reach even for minimal $\SL$ pointer
arithmetic: for example, satisfiability is already $\NP$-complete and
quantifier-free entailment is $\CoNP$-complete.  However, moving to full
rather than minimal pointer arithmetic incurs a surprisingly small
complexity cost: only quantified entailments become harder
($\Pi^\mathrm{EXP}_1$ as opposed to $\Pi^P_2$), although the small model
property is lost.

 We point out that, for the case of quantified
entailments in minimal pointer arithmetic,
 we establish here an upper bound also of $\Pi^P_2$,
 as well as the small model property.


We note that some of our upper bound complexity results can be seen as
following already from our earlier results for \emph{array separation
  logic}, where we allow array predicates $\mathrm{array}(x,y)$ as well
as pointers and arithmetic constraints.  Of course, pointer arithmetic
is often an essential feature in reasoning about array-manipulating
programs. The main value of our findings, we believe, is in our
\emph{lower} bound complexity results, which show that $\NP$-hardness or
worse is an inevitable consequence of admitting pointer arithmetic of
almost any kind.

We remark that our lower-bound results do however rely on the presence
of \emph{pointer} arithmetic, as opposed to arithmetic per se.  If
pointers and data values are strictly distinguished and arithmetic
permitted only over data, as is done e.g. in~\cite{Gu-Chen-Wu:16}, then
polynomial-time algorithms may still be achievable in that case.

\bibliographystyle{splncs03}
\bibliography{pointer_arith_refs}


\end{document}

\end{document}

